\newtheorem{assumption}{Assumption}
\newtheorem{definition}{Definition}
\newtheorem{lemma}{Lemma}
\newtheorem{remark}{Remark}
\newtheorem{corollary}{Corollary}
\newtheorem{theorem}{Theorem}
\newenvironment{proof}{{\it Proof}.\ }{\hfill $\blacksquare$\par}
\newcommand{\refappendix}[1]{\hyperref[#1]{Appendix~\ref*{#1}}}
\def\BibTeX{{\rm B\kern-.05em{\sc i\kern-.025em b}\kern-.08em
    T\kern-.1667em\lower.7ex\ hbox{E}\kern-.125emX}}
\begin{document}

\title{QI-DPFL: Quality-Aware and Incentive-Boosted Federated Learning with Differential Privacy \\
\thanks{\textsuperscript{*}Corresponding Author.}
\thanks{This work was supported by National Natural Science Foundation of China under Grant No. 62206320.}
}

\author{\IEEEauthorblockN{Wenhao Yuan, Xuehe Wang\textsuperscript{*}}
\IEEEauthorblockA{School of Artificial Intelligence, Sun Yat-sen University, Zhuhai, China \\
Email: \href{mailto:yuanwh7@mail2.sysu.edu.cn}{yuanwh7@mail2.sysu.edu.cn}, \href{mailto:wangxuehe@mail.sysu.edu.cn}{wangxuehe@mail.sysu.edu.cn}}
}

\maketitle

\begin{abstract}
Federated Learning (FL) has increasingly been recognized as an innovative and secure distributed model training paradigm, aiming to coordinate multiple edge clients to collaboratively train a shared model without uploading their private datasets. The challenge of encouraging mobile edge devices to participate zealously in FL model training procedures, while mitigating the privacy leakage risks during wireless transmission, remains comparatively unexplored so far. In this paper, we propose a novel approach, named \textbf{QI-DPFL} (\underline{Q}uality-Aware and \underline{I}ncentive-Boosted \underline{F}ederated \underline{L}earning with \underline{D}ifferential \underline{P}rivacy), to address the aforementioned intractable issue. To select clients with high-quality datasets, we first propose a quality-aware client selection mechanism based on the Earth Mover’s Distance (EMD) metric. Furthermore, to attract high-quality data contributors, we design an incentive-boosted mechanism that constructs the interactions between the central server and the selected clients as a two-stage Stackelberg game, where the central server designs the time-dependent reward to minimize its cost by considering the trade-off between accuracy loss and total reward allocated, and each selected client decides the privacy budget to maximize its utility. The Nash Equilibrium of the Stackelberg game is derived to find the optimal solution in each global iteration. The extensive experimental results on different real-world datasets demonstrate the effectiveness of our proposed FL framework, by realizing the goal of privacy protection and incentive compatibility.
\end{abstract}

\begin{IEEEkeywords}
Federated learning, Stackelberg game, differential privacy, client selection mechanism.
\end{IEEEkeywords}

\section{Introduction}
In the era of rapid advancements in science and technology, an unprecedented volume of data has been generated by edge devices. It is anticipated that the data volume in human society will experience geometric growth soon. Concurrently, the surge in private data is accompanied by escalating concerns over data privacy and security, drawing considerable focus from both academic and industrial sectors. Notably, the enactment of stringent data privacy regulations such as GDPR \cite{voss2016european}, poses a formidable challenge in accessing and utilizing high-quality private data for training artificial intelligence models. In addition, the huge communication costs associated with data transmission cannot be overlooked. These challenges pave the way for the emergence of innovative machine-learning technologies that mitigate the risk of privacy disclosure \cite{zhou2021truthful}.

Federated learning emerges as a compelling distributed machine-learning paradigm, offering multiple benefits including privacy preservation and communication efficiency \cite{he2023three}. However, the widespread implementation of efficient FL systems still encounters several challenges that warrant further investigation \cite{ng2020joint}. Most research concentrates on enhancing FL model performance and assumes that FL models possess adequate safety, whereas findings from \cite{zhu2019deep} indicate potential risks of significant privacy breaches in gradient propagation schemes. Differential privacy (DP) \cite{dwork2014algorithmic}, a prevalent method for safeguarding data privacy, is often employed to alleviate the adverse effects of strong noise injected while enhancing the protection level, advancements such as $\rho$-$z$CDP \cite{bun2016concentrated} and R$\Acute{\text{e}}$nyi DP \cite{mironov2017renyi} have been proposed. Recent studies have seen the integration of DP methods with FL, striving to strike a harmonious equilibrium between model performance and privacy preservation \cite{sun2022profit,wu2022adaptive}. However, most differential privacy-based FL approaches rely on standard $(\epsilon,\delta)$-DP mechanism, which is susceptible to the Catastrophe Mechanism \cite{near2021programming}. 

Additionally, an idealized assumption in current research posits that mobile devices will participate in FL model training unconditionally once invited, a notion that often falls short in practical scenarios as engaging in model training entails significant consumption of computational and communication resources, in the meanwhile, participants need to be wary of the potential risk of information leakage \cite{kang2019incentive}. Without a well-designed economic incentive mechanism, egocentric mobile devices are probably reluctant to partake in \cite{wu2021incentivizing}. Recently, incentive mechanism-based federated network optimization and FL have gradually gained extensive attention. \cite{yi2022stackelberg,xu2023personalized} focus on modeling the interactions between clients and the central server as a Stackelberg game. Besides, there has emerged a surge of auction-based FL algorithms \cite{zhou2021truthful,sun2022profit}. Contract theory-based FL models are also proposed \cite{ding2020optimal,wu2021incentivizing}. Nevertheless, most aforementioned works on incentive mechanism design overlook the security assurance during parameter transmission between the central server and edge nodes.

To incentivize the participation of mobile devices with high-quality data and eliminate the privacy threats associated with gradient disclosure, we innovatively propose a quality-aware and incentive-boosted federated learning framework based on the $\rho$-zero-concentrated differential privacy ($\rho$-$z$CDP) technique. We first design a client selection mechanism grounded in Earth Mover's Distance (EMD) metric, followed by rigorous analysis of the differentially private federated learning (DPFL) framework, which introduces artificial Gaussian noise to obscure local model parameters, thereby addressing privacy concerns. Further, based on the DPFL framework, the interactions between the heterogeneous clients and the central server are modeled as a two-stage Stackelberg game termed QI-DPFL. In Stage \uppercase\expandafter{\romannumeral1}, the central server devises a time-dependent reward for clients to jointly minimize the accuracy loss and total reward. In Stage \uppercase\expandafter{\romannumeral2}, each selected client determines the optimal privacy budget 
in accordance with the reward allocated to maximize respective utility. The multi-fold contributions of our work are summarized as follows:
\begin{itemize}[itemsep=0pt, leftmargin=*, align=right]

\item \emph{Privacy preservation and incentive mechanism in FL:} We propose a novel and efficient quality-aware and incentive-boosted federated learning framework based on $\rho$-$z$CDP mechanism, named QI-DPFL. We first select the clients with high-quality data and then model the interactions between the central server and the selected clients as a two-stage Stackelberg game, which enables each participant to freely customize its privacy budget while achieving the well model performance in the premise of protecting data privacy.

\item \emph{Earth Mover’s Distance for client selection:} We innovatively adopt the EMD metric for client selection mechanism design to sift the geographically distributed participants with high-quality datasets and improve the training performance.

\item \emph{Stackelberg Nash Equilibrium Analysis:} By analyzing the interactions between the central server and selected clients, we derive the optimal reward and the optimal privacy budget in Stage \uppercase\expandafter{\romannumeral1} and Stage  \uppercase\expandafter{\romannumeral2} respectively. Moreover, we demonstrate that the optimal strategy profile forms a Stackelberg Nash Equilibrium. Extensive experiments on different real-world datasets verify the effectiveness and security of our proposed differentially private federated learning framework.
\end{itemize}

\begin{figure*}[t]
\setlength{\abovecaptionskip}{4pt} 
\centerline{\includegraphics[width=1.0\textwidth, trim=25 0 15 0,clip]{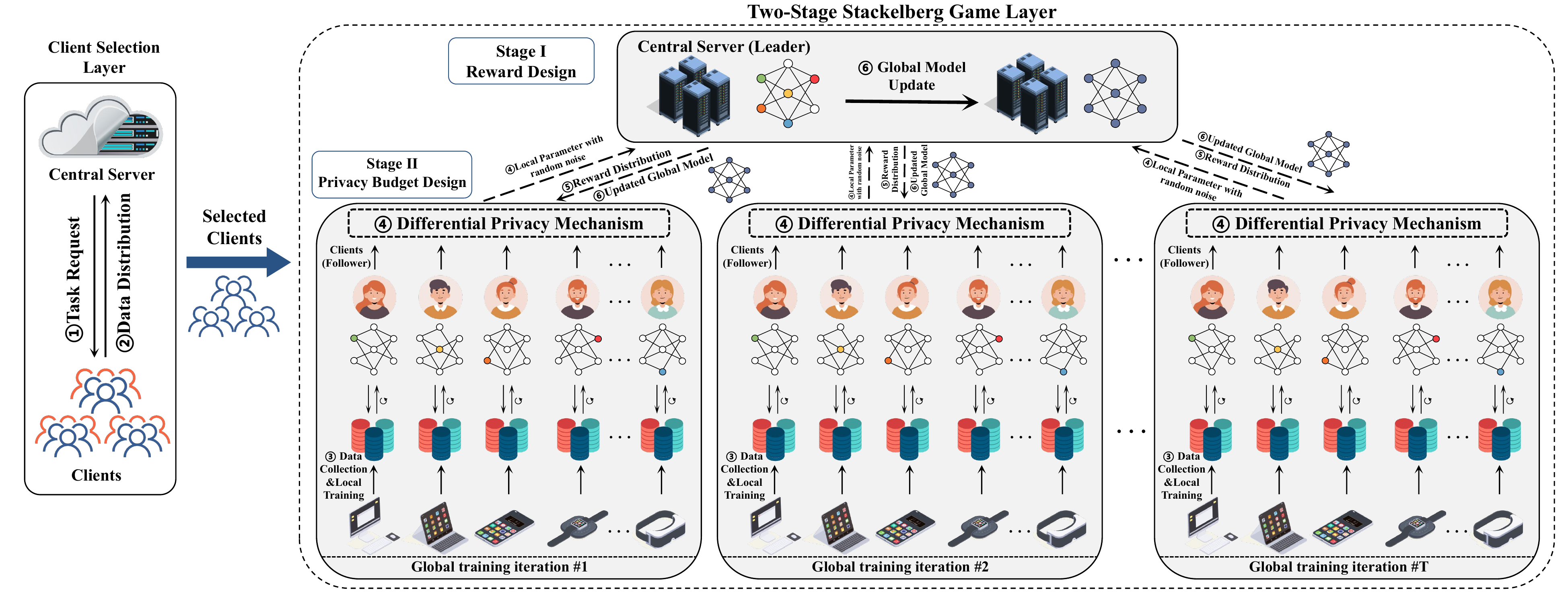}}
\vspace{-5pt}
\caption{The framework of QI-DPFL consists of the following procedures: Step 1: The central server initializes the global model and publishes the task request. Step 2: The clients submit their data distributions to the central server and are selected by EMD metric. Step 3: Each selected client collects private database $\mathcal{D}_{i}$ and performs local training. Step 4: Once finishing the local training, each client uploads the perturbed model parameter $\nabla \widetilde{F}_{i}(\boldsymbol{w_{i}(t)})$ to the central server. Step 5: The central server offers a reward $\mathcal{R}_{t}$ to all participating clients to compensate the cost of each selected participant suffered. Step 6: Upon aggregating local parameters, the global model updates and is further verified on the testing set. The training process concludes when the required accuracy or a preset number of iterations is reached. Otherwise, the central server redistributes the updated global model to clients for the next iteration (Steps 3-6).}
\label{fig1}
\vspace{-15pt}
\end{figure*}

\section{Preliminaries}

\subsection{Standard Federated Learning Model}
FL introduces an innovative decentralized machine learning paradigm in which a global model is collaboratively trained by tremendous geographically distributed clients with locally collected data. Each client engages in one or multiple epochs of mini-batch stochastic gradient descent (SGD), subsequently transmitting the updated local model to a central server for the local model aggregation and global model update. Then, the central server dispatches the updated global parameter to all clients to trigger a fresh cycle of local training until the global model converges or reaches a predefined maximum iteration.

Given $N$ clients participating in FL training and each client $\small i \in \{1, 2, \ldots, N\}$ utilizes its localized dataset $\small \mathcal{D}_{i}$ with a data size of $\small |\mathcal{D}_{i}|$ to contribute to model training. In each global iteration, each client parallelly performs $L$ ($\small L \! \geq \! 1$) epochs of SGD training to update its local parameter:
\begin{align}\label{sgd}
\boldsymbol{w_{i}^{l+1}(t)}=\boldsymbol{w_{i}^{l}(t)}-\eta_{i} \nabla F_{i}(\boldsymbol{w_{i}^{l}(t)}), 
\end{align}
where $\small \eta_{i} \! \in \! (0, 1)$ is the learning rate of client $i$, $\small l \! \in \! \{1,2, \ldots, \\ L\}$ and we define the local model $\small \boldsymbol{w_{i}^{L}(t)} \!=\! \boldsymbol{w_{i}(t)}$. The central server averages the local models from $N$ participating clients to update the global model $\boldsymbol{w(t)}$:
\begin{align} \label{local_parameter_aggregation}
\boldsymbol{w(t)}=\textstyle \sum_{i=1}^{N} \frac{|\mathcal{D}_{i}|}{\sum_{j=1}^{N} |\mathcal{D}_{j}|} \boldsymbol{w_{i}(t)}.
\end{align}

The goal of FL is to find the optimal global parameter $\boldsymbol{w}^{*}$ to minimize the global loss function $\small F(\boldsymbol{w})$:
\begin{align} \label{optimal_global_parameter}
\textstyle \boldsymbol{w}^{*}\!=\!\text{arg}\min _{\boldsymbol{w}} F(\boldsymbol{w})\!=\!\text{arg}\min _{\boldsymbol{w}} \sum_{i=1}^{N}\!\frac{D_{i}}{\sum_{j=1}^{N} D_{j}} F_{i}(\boldsymbol{w}). 
\end{align}

For the sake of theoretical analysis in the later content, we employ the following commonly used assumptions \cite{wu2021fast, sun2023stability}.
\begin{assumption} \label{smoothness_assumption}
\emph{[$\beta$-Lipschitz Smoothness] The local loss function $\small F_{i}(\boldsymbol{w})$ is $\beta$-Lipschitz smoothness for each participating client $\small i \in\{0,1, \ldots, N\}$ with any $\small \boldsymbol{w}, \boldsymbol{w}^{\prime} \in \mathbb{R}^{d}$: }
\begin{align}
\|\nabla{F_{i}(\boldsymbol{w}})-\nabla{F_{i}(\boldsymbol{w}^{\prime}})\| \leq \beta\|\boldsymbol{w}-\boldsymbol{w}^{\prime}\|. 
\end{align}
\end{assumption}
\begin{assumption} \label{convexity_assumption}
\emph{[$\lambda$-Strong Convexity] The global loss function $\small F(\boldsymbol{w})$ is $\lambda$-strong convex with any $\small \boldsymbol{w}, \boldsymbol{w}^{\prime} \in \mathbb{R}^{d}$: }
\begin{align}
F(\boldsymbol{w})\!-\!F(\boldsymbol{w}^{\prime}) \geq \langle \boldsymbol{w}\!-\!\boldsymbol{w}^{\prime}, \nabla{F(\boldsymbol{w}^{\prime})} \rangle \!+\! \dfrac{\lambda}{2} \|\boldsymbol{w}\!-\!\boldsymbol{w}^{\prime}\|^{2}. 
\end{align}
\end{assumption}

\subsection{Differential Privacy Mechanism}

For tackling the attacks such as gradient inverse attack \cite{zhu2019deep} that might disclose the original training data without accessing the datasets, $\rho$-zero-concentrated differential privacy ($\rho$-$z$CDP) was proposed in \cite{bun2016concentrated}, which attains a tight composition bound and is more suitable for analyzing the end-to-end privacy loss of iteration algorithms \cite{hu2020trading}. 

Firstly, we define a metric $\small \mathcal{L}_{p}$ to measure the privacy loss. Specifically, for a randomized mechanism $\small \mathcal{M}\! : \!\mathcal{X} \! \rightarrow \! \mathcal{E}$ with domain $\small \mathcal{X}$ and range $\small \mathcal{E}$ given any two adjacent datasets $\small \mathcal{D}, \mathcal{D}^{^{\prime}} \\ \! \subseteq \! \mathcal{X}$ with the same size but only differ by one sample, after observing an output $o \in \mathbb{R}$, the privacy loss $\small \mathcal{L}_{p}$ is given as:
\begin{align}\label{privacy_loss}
\textstyle \mathcal{L}_{p}=\text{ln}\left(\frac{Pr[\mathcal{M}(\mathcal{D})= o]}{Pr[\mathcal{M}(\mathcal{D}')= o]}\right). 
\end{align}

Then, the formal definition of $\rho$-$z$CDP is given as follows:
\begin{definition}\label{dp_definition}
\emph{A randomized mechanism $\small \mathcal{M}\!:\! \mathcal{X} \!\! \rightarrow \! \mathcal{E}$ with domain $\small \mathcal{X}$ and range $\small \mathcal{E}$ satisfies $\rho$-zero-concentrated differential privacy ($\rho$-$z$CDP) if for any $\small \alpha \!\in\! (1, \infty)$, we have:}
\begin{align}
\mathbb{E}[e^{(\alpha-1)\mathcal{L}_{p}}] \leq e^{(\alpha-1)(\rho\alpha)}. 
\end{align}
\end{definition}

Based on the Gaussian mechanism, given a query function $\small Q \!:\! \mathcal{X} \!\! \rightarrow \! \mathcal{E}$, the sensitivity of the query function $\small Q$ is defined as $\small \!\Delta{Q} \!=\! \max_{\mathcal{D},\mathcal{D}^{\prime}} \!\! \|Q(\mathcal{D})-Q(\mathcal{D}^{\prime}) \|_{2}$ for any two adjacent datasets $\small \mathcal{D}, \mathcal{D}' \! \subseteq \! \mathcal{X}$. Specifically, in the $t$-th global training iteration, by adding the artificial Gaussian noise $\small \boldsymbol{n_{i}(t)} \! \sim \! \mathcal{N}(0, \sigma_{i}^{2}(t))$, the transmitted parameter of client $i$ becomes:
\begin{align} \label{perturb_gradient}
\small \nabla \widetilde{F}_{i}(\boldsymbol{w(t)}) = \nabla F_{i}(\boldsymbol{w(t)}) + \boldsymbol{n_{i}(t)},
\end{align}
where $\frac{\Delta{Q}^{2}}{2\sigma_{i}^{2}(t)}$-$z$CDP is satisfied with $\rho_{i}^{t}\!=\!\frac{\Delta{Q}^{2}}{2\sigma_{i}^{2}(t)}$\cite{dwork2014algorithmic}. Based on the definition of the query function, we can easily derive the upper bound of the sensitivity $\small \Delta{Q}$ as given in Corollary \ref{corollary1}.

\begin{corollary} \label{corollary1}
\emph{In global iteration, by utilizing the Gaussian mechanism to perturb transmitted parameter and implementing $\rho$-$z$CDP mechanism for each participating client $i$, the sensitivity $\Delta{Q}$ of query function $Q$ is bounded by $2 C/|\mathcal{D}_{i}|$.}
\end{corollary}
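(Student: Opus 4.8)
The plan is to unfold the query function $Q$ as the averaged gradient $\nabla F_i$ computed on the local dataset, and then bound the contribution of the single record in which two adjacent datasets differ. First I would write the empirical local loss as an average over the $|\mathcal{D}_i|$ samples, $F_i(\boldsymbol{w}) = \frac{1}{|\mathcal{D}_i|}\sum_{x \in \mathcal{D}_i} f(\boldsymbol{w}, x)$ with $f(\boldsymbol{w}, x)$ the per-sample loss, so that the query output is the per-sample average $Q(\mathcal{D}_i) = \nabla F_i(\boldsymbol{w(t)}) = \frac{1}{|\mathcal{D}_i|}\sum_{x \in \mathcal{D}_i} \nabla f(\boldsymbol{w(t)}, x)$.

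Next, recalling that the adjacency notion fixed in the excerpt is the \emph{replace} model (two datasets of the same size differing by exactly one sample), I would take $\mathcal{D}$ to hold $x$ where $\mathcal{D}'$ holds $x'$. All shared terms cancel in the difference, leaving $Q(\mathcal{D}) - Q(\mathcal{D}') = \frac{1}{|\mathcal{D}_i|}\bigl(\nabla f(\boldsymbol{w(t)}, x) - \nabla f(\boldsymbol{w(t)}, x')\bigr)$. Taking the Euclidean norm and applying the triangle inequality yields $\|Q(\mathcal{D}) - Q(\mathcal{D}')\|_2 \leq \frac{1}{|\mathcal{D}_i|}\bigl(\|\nabla f(\boldsymbol{w(t)}, x)\|_2 + \|\nabla f(\boldsymbol{w(t)}, x')\|_2\bigr)$.

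The final step invokes the gradient-clipping convention standard in differentially private SGD: each per-sample gradient is clipped so that its $\ell_2$-norm is at most $C$, i.e., $\|\nabla f(\boldsymbol{w(t)}, x)\|_2 \leq C$ for every sample. Substituting this bound into both terms gives $\|Q(\mathcal{D}) - Q(\mathcal{D}')\|_2 \leq \frac{2C}{|\mathcal{D}_i|}$, and maximizing over all adjacent pairs delivers $\Delta Q \leq 2C/|\mathcal{D}_i|$, as claimed.

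The main obstacle is not the algebra, which reduces to a one-line cancellation followed by the triangle inequality, but rather making explicit the role of the clipping threshold $C$. The bound holds precisely because each individual per-sample gradient norm is capped at $C$; without clipping (or an a priori bound on per-sample gradients), the difference of two unbounded gradients could not be controlled. I would therefore state the clipping assumption $\|\nabla f\|_2 \leq C$ up front as the operative definition of $C$, and remark that the factor of $2$ is tight in the worst case, attained when the swapped-out and swapped-in samples have gradients of opposite direction, each saturating the clipping bound.
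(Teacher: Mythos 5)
Your proposal is correct and follows essentially the same route as the paper: expand $Q(\mathcal{D}_i)=\nabla F_i$ as the per-sample average of gradients, note that adjacent datasets differ in one record, and invoke a clipping threshold $C$ to obtain $\Delta Q \leq 2C/|\mathcal{D}_i|$. In fact your write-up is the more careful one --- you explicitly cancel the shared terms, apply the triangle inequality to the two differing per-sample gradients, and place the clipping assumption where it is actually needed (on the per-sample gradient norms), whereas the paper states the bound as an equality over the two full sums and phrases its clipping condition as $\|\boldsymbol{w_{i}(t)}\|\leq C$ on the local model rather than on the gradients.
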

\begin{proof}
For client $i$ with any two adjacent datasets $\mathcal{D}_{i}$ and $\mathcal{D}_{i}^{'}$, the sensitivity of query function $Q$ with input $\mathcal{D}_{i}$ and $\mathcal{D}_{i}^{'}$ can be obtained as follows:
\begin{align} \label{sensitivity}
\small \Delta{Q} \!=\! & \small \max_{\mathcal{D}_{i},\mathcal{D}_{i}^{'}}\|Q(\mathcal{D}_{i}) \!-\! Q(\mathcal{D}_{i}^{'}) \|_{2} \!=\! \max_{\mathcal{D}_{i},\mathcal{D}_{i}^{'}} \textstyle \frac{1}{|\mathcal{D}_{i}|} \|\! \sum_{j \in \mathcal{D}_{i}}\!\!\! \nabla f_{i}^{j}(\boldsymbol{w}, \boldsymbol{x}_{\boldsymbol{i}}^{\boldsymbol{j}}) \nonumber \\
&- \small \textstyle \sum_{j \in \mathcal{D}_{i}^{'}}\! \nabla f_{i}^{j}(\boldsymbol{w}^{\prime}, \boldsymbol{x}_{\boldsymbol{i}}^{\boldsymbol{j}})\|_{2} \!=\! \frac{2 C}{|\mathcal{D}_{i}|}, 
\end{align}
where we assume that there exists a clipping threshold $C$ for the $i$-th client's local model in $t$-th global iteration in the absence of adding artificial perturbation, i.e., $\|\boldsymbol{w_{i}(t)}\| \! \leq C$.
\end{proof}

In $t$-th global training iteration, based on Corollary \ref{corollary1} and note that $\rho_{i}^{t}=\frac{\Delta{Q}^{2}}{2\sigma_{i}^{2}(t)}$, the variance of the Gaussian random noise $\sigma_{i}^{2}(t)$ of client $i$ can be easily derived as follows:
\begin{align}\label{noise}
\textstyle \sigma_{i}^{2}(t)=\frac{2 C^{2}}{\rho_{i}^{t}|\mathcal{D}_{i}|^{2}}.
\end{align}

\section{System Model}
We propose a two-layer federated learning framework based on differential privacy technique as shown in Fig. \ref{fig1}.  In the client selection layer, the central server outsources FL tasks to all clients. The clients who are willing to participate submit their data distributions only containing the information of label frequencies to the central server, based on which, the central server selects the participants with high-quality data according to the EMD metric. Then, in the second layer, the interactions between the central server and selected clients are formulated as a two-stage Stackelberg game, where the central server decides the optimal rewards to the selected clients to incentivize them to contribute data with a high privacy budget, and each selected client determines the privacy budget $\rho_{i}^{t}$ based on the rewards and adds artificial noise to their local parameter to avoid severe privacy leakage during gradient uploading. 

\subsection{Quality-Aware Client Selection Mechanism}
From the perspective of the central server, to minimize the cost while reaching the accuracy threshold, it is crucial to select the clients with superior data quality by a metric to quantify clients' potential contributions to the FL system. To disclose the pertinent information about the local data while ensuring privacy preservation, we focus on the critical attribute of local data (i.e., data distribution) \cite{jiao2020toward}.

In FL, the data distribution varies owing to the distinct preferences of heterogeneous clients, leading to a non-independent and identically distributed (Non-IID) setting. The characteristic of Non-IID data dominantly affects the model performance, such as training accuracy \cite{mcmahan2017communication}. Inspired by \cite{jiao2020toward}, the accuracy attenuation is significantly affected by the weight divergence, which can be quantified by the Earth Mover's Distance (EMD) metric. The larger EMD value indicates higher weight divergence, thus damaging the global model quality.

In the first layer, we assume that there are a total of $H$ clients who are willing to participate in the FL model training process. Considering a $P$ class classification task that defines over a compact space $\small \mathcal{Y}$ and a label space $\small \mathcal{Z}$. The data sample $\small \mathcal{D}_{h}\!=\!\{\boldsymbol{x_{h}},y_{h}\}$ of client $\small h \! \in \! \{1, 2, \ldots, H\}$ with $\small \boldsymbol{x_{h}}\in\mathcal{Y}$ and $\small y_{h}\in\mathcal{Z}$ follows the distribution $\small \mathbb{P}_{h}$. Under the premise of the actual distribution $\small \mathbb{P}_{a}$ for the whole population, we denote the EMD of $\small \mathcal{D}_{h}$ by $\small \theta_{h}$, which can be calculated as follows:
\begin{align}\label{emd}
\small \theta_{h}=\textstyle \sum_{j \in \mathcal{Y}}\left\|\mathbb{P}_{h}(y=j)-\mathbb{P}_{a}(y=j)\right\|,
\end{align}
where the actual distribution $\mathbb{P}_{a}$ is a reference distribution and can be the public information or estimated according to the historical data. If the EMD value $\small \theta_{h}$ of client $\small h \! \in \!  \{1, 2, \ldots, H\}$ is larger than the pre-set threshold (i.e., $\small \theta_{h} \!\! > \!\! \theta_{\text{th}}$), client $h$ encounters a failure in executing the FL task.

\subsection{Incentive Mechanism Design with Stackelberg Game}
Supposing that $N$ clients are selected by the central server. The interactions between the central server and selected clients are modeled as a two-stage Stackelberg game. Specifically, at Stage \uppercase\expandafter{\romannumeral1}, the central server, which acts as the leader, decides the optimal payment $\small \mathcal{R}_{t}^{*}$ in $t$-th global iteration to minimize its cost $\small \mathcal{U}_{T}$. Then, at Stage \uppercase\expandafter{\romannumeral2}, based on the reward allocated by the central server, each selected client $i \in \{1, 2, \ldots, N\}$, which acts as the follower, maximizes its utility function $\small \mathcal{U}_{i}^{t}$ by determining the optimal privacy budget $\rho_{i}^{t}$. 

\subsubsection{Central Server's Cost (Stage \texorpdfstring {\uppercase\expandafter{\romannumeral1}}{})}

Before introducing the cost function of the central server, we first discuss how the privacy budget $\rho_{i}^{t}$ of each client affects the accuracy loss of the global model. From Eq (\ref{perturb_gradient}), we can derive the global model with Gaussian random noise as follows:
\begin{align}\label{perturb_gradient_noise}
\textstyle \nabla \widetilde{F}(\boldsymbol{w(t)})&= \textstyle  \frac{1}{N} \sum_{i=1}^{N}\left(\nabla F_{i}(\boldsymbol{w(t)}) + \boldsymbol{n_{i}(t)}\right) \nonumber \\ 
&= \textstyle \nabla F(\boldsymbol{w(t)}) + \frac{1}{N}\textstyle\sum_{i=1}^{N} \boldsymbol{n_{i}(t)}. 
\end{align}

Inspired by \cite{rakhlin2011making, yi2022stackelberg}, we assume that the global loss function $\small \nabla F(\boldsymbol{w(t)})$ attains an upper bound $V$ (i.e., $\small \|\nabla F(\boldsymbol{w(t)})\|_{2} \! \leq \! V$). Further, note that the Gaussian random noise possesses zero mean and $\small \mathbb{E}[\|\! \sum_{i=1}^{N} \! \boldsymbol{n_{i}(t)}\|_{2}^{2}] \! = \! d \sigma_{i}^{2}(t)$, where $d$ represents the dimension of the input vector and the variance $\small \sigma_{i}^{2}(t)$ is determined by client $i$'s privacy budget $\small \rho_{i}^{t}$ as shown in Eq (\ref{noise}). Thus, the upper bound of $\small \mathbb{E}[\|\nabla \widetilde{F}(\boldsymbol{w(t)})\|_{2}^{2}]$ can be derived as:
\begin{align}\label{expectation_perturb_gradient}
\textstyle \mathbb{E}[\|\nabla \widetilde{F}(\boldsymbol{w(t)})\|_{2}^{2}] &= \textstyle \mathbb{E}[\|\nabla F(\boldsymbol{w(t)}) + \frac{1}{N} \sum_{i=1}^{N} \boldsymbol{n_{i}(t)} \|_{2}^{2}] \nonumber \\
&= \textstyle \mathbb{E}[\|\nabla F(\boldsymbol{w(t)})\|_{2}^{2}] \!+\! \frac{1}{N^{2}}\mathbb{E}[\|\textstyle \sum_{i=1}^{N}\! \boldsymbol{n_{i}(t)}\|_{2}^{2}]\nonumber \\
&\leq \textstyle V^{2} + \frac{d}{N^{2}}\textstyle \sum_{i=1}^{N} {\sigma_{i}^{2}(t)} \triangleq G^{2}(t). 
\end{align}

Suppose that the global loss function  $\nabla F(\boldsymbol{w(t)})$ satisfies $\beta$-Lipschitz Smoothness (Assumption \ref{smoothness_assumption}) and $\lambda$-Strong Convexity (Assumption \ref{convexity_assumption}), attains minimum at $\small \boldsymbol{w^{*}}$ and $\small \mathbb{E}[\|\nabla \widetilde{F}(\boldsymbol{w(t)})\|_{2}^{2}] \!\! \leq \!\! G^{2}(t)$. Since the upper bound of $\small \mathbb{E}[\|\nabla \widetilde{F}(\boldsymbol{w(t)})\|_{2}^{2}]$ varies with time, we define $\small G^{2} \!=\! \sum_{t=1}^{T}\! G^{2}(t)/T$. Denote the learning rate as $\small \eta_{t}\!=\! 1/\lambda t$, the accuracy loss can be expressed as:
\begin{align}\label{accuracy_loss}
\textstyle \mathbb{E}[F(\boldsymbol{w(T)})-F(\boldsymbol{w^{*}})] \leq \frac{2\beta G^{2}}{\lambda^{2}T}. 
\end{align}

Denote the reward vector by $\small \boldsymbol{\mathcal{R}} \!=\! \{\mathcal{R}_{1}, \mathcal{R}_{2}, \ldots, \mathcal{R}_{T}\}$ and the privacy budget vector as $\small \boldsymbol{\rho} \!=\! \{\rho_{i}^{t}, i \! \in \! \{1,2,\ldots, N\}, t \! \in \! \{1,2,\ldots, \\ T\}\}$. Then, the central server's cost function can be expressed as the summation of the accuracy loss and total reward:
\begin{align}\label{central_server_utility}
& \textstyle \mathcal{U}_{T}(\boldsymbol{\mathcal{R}},\boldsymbol{\rho}) = \gamma \frac{2\beta G^{2}}{\lambda^{2}T} + (1-\gamma) \textstyle \sum_{k=1}^{T} \pi^{k-1} \mathcal{R}_{k}  \\
&= \textstyle \frac{2\beta\gamma}{\lambda^{2}T^{2}} \! \textstyle \sum_{t=1}^{T} (V^{2} \!+\!\! \sum_{i=1}^{N} \! \frac{2 d C^{2}}{|\mathcal{D}_{i}|^{2}\rho_{i}^{t}N^{2}}) \!\!+\! (1 \!-\! \gamma) \! \sum_{k=1}^{T}\! \pi^{k-1} \mathcal{R}_{k},  \nonumber
\end{align}
where discount factor $\small \pi \! \in \! (0,1)$ measures the decrement of the value of reward $\mathcal{R}_{t}$ over time.

\subsubsection{Client's Utility (Stage \texorpdfstring {\uppercase\expandafter{\romannumeral2}}{})}
In the $t$-th global iteration, given the central server's reward $\mathcal{R}_{t}$, the reward $r_{i}^{t}$ allocated to client $i$ is:
$r_{i}^{t} \!=\! \frac{\rho_{i}^{t}}{\sum_{j=1}^{N}\rho_{j}^{t}}\mathcal{R}_{t}$. The utility of client $i$ in $t$-th global iteration is defined as the difference between the reward from the central server and training cost $\mathcal{C}_{i,t}$, i.e.,
\begin{align}
&\mathcal{U}_{i}^{t}(\rho_{i}^{t}, \boldsymbol{\rho_{\text{-}i}^{t}},\mathcal{R}_{t}) = \textstyle \frac{\rho_{i}^{t}}{\sum_{j=1}^{N}\rho_{j}^{t}}\mathcal{R}_{t}-\mathcal{C}_{i,t}, \label{client_utility} \\
&\mathcal{C}_{i,t} = \phi_{1}\mathcal{C}_{i,t}^{pv} + \phi_{2}\mathcal{C}_{i,t}^{d} + \phi_{3}\mathcal{C}_{i,t}^{cp} + \phi_{4}\mathcal{C}_{i,t}^{cm}, \label{client_cost} 
\end{align}
where $\small \boldsymbol{\rho_{\text{-}i}^{t}}\!=\!\{\rho_{1}^{t}, \ldots, \rho_{i-1}^{t}, \rho_{i+1}^{t}, \ldots, \rho_{N}^{t}\}$ and $\small \phi_{k}$, $k \! \in \! \{1, 2, 3, 4\}$ is positive coefficients. The training cost $\small \mathcal{C}_{i,t}$ consists of privacy cost $\small \mathcal{C}_{i,t}^{pv}$, local data cost $\small \mathcal{C}_{i,t}^{d}$, computation cost $\small \mathcal{C}_{i,t}^{cp}$ and communication cost $\small \mathcal{C}_{i,t}^{cm}$. Among all components of the training cost of the client $\small i$, privacy cost $\small \mathcal{C}_{i,t}^{pv}$ is closely related to the privacy budget $\small \rho_{i}^{t}$, as a larger privacy budget $\small \rho_{i}^{t}$ signifies more precise data, yet it also corresponds to an increased vulnerability to privacy breaches.  Specifically, inspired by \cite{hu2020trading}, we denote the privacy cost of client $i$ as a function of $\small c(\nu_{i}^{t}, \rho_{i}^{t})$, where $\small \nu_{i}^{t} \! > \! 0$ is the privacy value that is assumed publicly known. Here, we consider linear privacy cost function for each client, i.e., $\mathcal{C}_{i,t}^{pv} = c(\nu_{i}^{t}, \rho_{i}^{t}) = \nu_{i}^{t} \cdot \rho_{i}^{t}$. 


The objective of $i$-th client is to maximize its utility function $\mathcal{U}_{i}^{t}$ by dynamically adjusting privacy budgets $\rho_{i}^{t}$ at each global iteration $t$ according to the reward $\mathcal{R}_{t}$ from the central server. The goal of the central server is to minimize its cost function by adjusting the reward $\mathcal{R}_{t}$, $t \! \in \! \{1, 2, \ldots, T\}$ distributed to participating clients in the premise of reaching preset global model accuracy. Note that the reward $\boldsymbol{\mathcal{R}}$ to clients will affect the clients' designs of privacy budget $\boldsymbol{\rho}$, which in return affects the central server's cost as shown in Eq (\ref{central_server_utility}). Thus, the two-stage Stackelberg game can be formulated as:
\begin{align}
\text{Stage \uppercase\expandafter{\romannumeral1}:} &\textstyle \ \min_{\boldsymbol{\mathcal{R}}} \mathcal{U}_{T}(\boldsymbol{\mathcal{R}},\boldsymbol{\rho}), \label{central_server_obj} \\
\text{Stage \uppercase\expandafter{\romannumeral2}:} &\textstyle \ \max_{\rho_{i}^{t}}\mathcal{U}_{i}^{t}(\rho_{i}^{t}, \boldsymbol{\rho_{\text{-}i}^{t}},\mathcal{R}_{t}), \label{client_obj}  
\end{align}
where the privacy budget $\small \rho_{i}^{t}$ over time form the strategy profile $\small \boldsymbol{\rho_{i}}$ of client $i$, i.e.,  $\small \boldsymbol{\rho_{i}} =\{\rho_{i}^{t},t \in \{1,2, \ldots, T\}\}$. 

\section{Stackelberg Nash  Equilibrium Analysis}
In this section, we will find the optimal strategy profile $\small (\boldsymbol{\mathcal{R}}, \\ \{\boldsymbol{\rho_{i}}$, $i \! \in \! \{1,2, \ldots, N\}\})$, with $\small \boldsymbol{\mathcal{R}} \!=\! \{\mathcal{R}_{t}, t \! \in \! \{1,2, \ldots,  T\}\}\!$ and $\small \boldsymbol{\rho_{i}} \!=\! \{\rho_{i}^{t},t \! \in \! \{1,2, \ldots, T\}\}$ for the central server and selected clients in the two-stage Stackelberg game through backward induction. Firstly, we concentrate on the followers' operation and derive each selected client $i$’s optimal privacy budget $\rho_{i}^{t*}$ in the $t$-th global iteration under any given reward $\mathcal{R}_{t}$. Then, considering the trade-off between the model accuracy loss and payment to the clients, we deduce the central server’s optimal strategy $\mathcal{R}_{t}^{*}$. Finally, we prove that the optimal solution forms the Stackelberg Nash Equilibrium. 

\subsection{Optimal Strategy Profile}
We adopt a backward induction approach to derive the optimal strategy of the central server and each client respectively. First of all, we analyze the optimal strategy of each selected client by determining the optimal privacy budget $\rho_{i}^{t*}$ and are presented in the following theorem.

\begin{theorem}\label{theorem_optimal_rho}
\emph{In Stage \uppercase\expandafter{\romannumeral2}, given payment $\mathcal{R}_{t}$ in the $t$-th global iteration, the optimal privacy budget of each client $i$ is:}
\begin{align}\label{optimal_rho}
\textstyle \rho_{i}^{t*}=\frac{\mathcal{R}_{t}(N-1)}{\phi_{1}\sum_{i=1}^{N}\nu_{i}^{t}} - \frac{\mathcal{R}_{t}\nu_{i}^{t}}{\phi_{1}}(\frac{N-1}{\sum_{i=1}^{N}\nu_{i}^{t}})^{2}. 
\end{align}
\end{theorem}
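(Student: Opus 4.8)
The plan is to treat Stage~\uppercase\expandafter{\romannumeral2} as a non-cooperative game among the $N$ selected clients (the followers) and to characterize its Nash Equilibrium through first-order conditions, for a fixed reward $\mathcal{R}_t$. The first step is to isolate the dependence of the utility $\mathcal{U}_i^t$ on the client's own decision variable $\rho_i^t$. Inspecting Eq.~(\ref{client_utility}) and Eq.~(\ref{client_cost}), only the reward-share term $\frac{\rho_i^t}{\sum_{j=1}^N \rho_j^t}\mathcal{R}_t$ and the linear privacy cost $\phi_1 \nu_i^t \rho_i^t$ vary with $\rho_i^t$; the data, computation, and communication costs $\mathcal{C}_{i,t}^d$, $\mathcal{C}_{i,t}^{cp}$, $\mathcal{C}_{i,t}^{cm}$ are constants with respect to $\rho_i^t$ and therefore drop out under differentiation.

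Next I would write $S \triangleq \sum_{j=1}^N \rho_j^t$ and compute the stationarity condition $\partial \mathcal{U}_i^t / \partial \rho_i^t = 0$. Since $\frac{\partial}{\partial \rho_i^t}\bigl(\frac{\rho_i^t}{S}\bigr) = \frac{S-\rho_i^t}{S^2} = \frac{\sum_{j\neq i}\rho_j^t}{S^2}$, the first-order condition reads $\frac{(\sum_{j\neq i}\rho_j^t)\mathcal{R}_t}{S^2} = \phi_1 \nu_i^t$. Before solving, I would confirm that this stationary point is indeed a maximizer by checking the second-order condition: the second derivative of the reward share equals $-2(\sum_{j\neq i}\rho_j^t)/S^3 < 0$ while the privacy cost is linear, so $\mathcal{U}_i^t$ is strictly concave in $\rho_i^t$ and the first-order condition pins down the unique best response.

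The central trick---and the step I expect to be the main obstacle---is that the $N$ first-order conditions are coupled through the shared denominator $S$, so they cannot be solved one at a time. The plan is to decouple them by aggregation: summing the stationarity conditions over all $i \in \{1,\dots,N\}$ and using the identity $\sum_{i=1}^N (S-\rho_i^t) = (N-1)S$ collapses the left-hand side to $\frac{(N-1)\mathcal{R}_t}{S}$, which yields the closed form $S = \frac{(N-1)\mathcal{R}_t}{\phi_1 \sum_{j=1}^N \nu_j^t}$ for the equilibrium aggregate privacy budget.

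Finally, I would substitute this expression for $S$ back into each client's individual first-order condition, rearranged as $\rho_i^t = S - \frac{\phi_1 \nu_i^t}{\mathcal{R}_t}S^2$, and simplify to recover the stated $\rho_i^{t*}$ in Eq.~(\ref{optimal_rho}); the remaining manipulation is routine algebra (noting that the dummy summation index in the theorem statement plays the role of $j$ here). One caveat worth flagging is that this closed form is a feasible privacy budget only when it is nonnegative; I would remark that, for heterogeneous privacy values $\nu_i^t$, this implicitly requires their dispersion to be mild enough that $\rho_i^{t*} \geq 0$, which is assumed to hold in the operating regime considered.
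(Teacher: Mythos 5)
Your proposal is correct and follows essentially the same route as the paper's own proof: first-order condition, strict concavity in $\rho_i^t$, summing the $N$ stationarity conditions to obtain the aggregate $\sum_{i}\rho_i^{t}=\frac{(N-1)\mathcal{R}_t}{\phi_1\sum_{i}\nu_i^{t}}$, then back-substitution. Your closing remark that nonnegativity of $\rho_i^{t*}$ requires the dispersion of the $\nu_i^t$ to be mild is a valid observation that the paper's proof does not make explicit.
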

\begin{proof}
Firstly, we derive the first-order and the second-order derivatives of each client $i$'s utility function $\mathcal{U}_{i}^{t}(\rho_{i}^{t}, \boldsymbol{\rho_{\text{-}i}^{t}},\mathcal{R}_{t})$ concerning privacy budget $\rho_{i}^{t}$ as follows:
\begin{align}
&\textstyle \frac{\partial \mathcal{U}_{i}^{t}}{\partial \rho_{i}^{t}}\!=\! \frac{\mathcal{R}_{t}\! \sum_{j=1}^{N}\!\rho_{j}^{t} \!-\! \rho_{i}^{t}}{(\sum_{i=1}^{N}\!\rho_{i}^{t})^{2}} \!-\! \phi_{1}\nu_{i}^{t} \!=\! \frac{\mathcal{R}_{t}\! \sum_{j\in N \setminus \{i\}}\!\rho_{j}^{t}}{(\sum_{i=1}^{N}\!\rho_{i}^{t})^{2}} \!-\! \phi_{1}\nu_{i}^{t}, \label{client_utility_first_order} \\
&\textstyle \frac{\partial^{2} \mathcal{U}_{i}^{t}}{\partial (\rho_{i}^{t})^{2}}= - 2\mathcal{R}_{t} \cdot \frac{\sum_{j\in N \setminus \{i\}}\rho_{j}^{t}}{(\sum_{i=1}^{N}\rho_{i}^{t})^{3}} < 0. \label{client_utility_second_order}
\end{align}

As the second-order derivative $\small \partial^{2} \mathcal{U}_{i}^{t}/\partial (\rho_{i}^{t})^{2} \!>\! 0$, the utility of each client $i$ is strictly concave in the feasible region of $\rho_{i}^{t}$. Then, we derive the optimal privacy budget of client $i$ in $t$-th global iteration by solving equation $\partial \mathcal{U}_{i}^{t}/\partial \rho_{i}^{t}=0$, i.e.,
\begin{align}\label{optimal_rho_with_other_rho}
\textstyle \rho_{i}^{t*}=\sum_{i=1}^{N}\rho_{i}^{t} - \frac{\phi_{1}\nu_{i}^{t}}{\mathcal{R}_{t}}(\sum_{i=1}^{N}\rho_{i}^{t})^{2}.
\end{align}

From Eq (\ref{client_utility_first_order}), since client $i$ fails to acquire other clients' privacy budget $\small \rho_{j}^{t}, j \!\in\! N \!\setminus\! \{i\}$ as it is privacy information and there is no communication among clients during any consecutive global training iterations, we need to derive $\sum_{i=1}^{N}\rho_{i}^{t}$ term. Thus, Eq (\ref{client_utility_first_order}) can be rewritten as follows:
\begin{align}\label{sum_rho_except_i}
\textstyle \sum_{j\in N \setminus \{i\}}\rho_{j}^{t} =\frac{\phi_{1}\nu_{i}^{t}}{\mathcal{R}_{t}}(\sum_{i=1}^{N}\rho_{i}^{t})^{2}.
\end{align}

After summing up both sides of Eq (\ref{sum_rho_except_i}), we have:
\begin{align}
\textstyle \sum_{i=1}^{N} [\textstyle \sum_{j\in N \setminus \{i\}}\rho_{j}^{t}] &= \textstyle \sum_{i=1}^{N} \frac{\phi_{1}\nu_{i}^{t}}{\mathcal{R}_{t}} (\sum_{i=1}^{N}\rho_{i}^{t})^{2}, \nonumber \\
\Rightarrow \ \textstyle (N\!-\!1)\sum_{i=1}^{N}\rho_{i}^{t} &= \textstyle \frac{\phi_{1}}{\mathcal{R}_{t}}(\sum_{i=1}^{N}\rho_{i}^{t})^{2}\sum_{i=1}^{N}\nu_{i}^{t}, \nonumber \\
\Rightarrow \ \textstyle \sum_{i=1}^{N}\rho_{i}^{t} &= \textstyle \frac{(N-1)\mathcal{R}_{t}}{\phi_{1}\sum_{i=1}^{N}\nu_{i}^{t}}. 
\end{align}

By substituting the term $\sum_{i=1}^{N}\rho_{i}^{t}$ in Eq (\ref{optimal_rho_with_other_rho}), we can finalize the expression of the optimal privacy budget $\rho_{i}^{t*}$ as:
\begin{align}
\rho_{i}^{t*} = \textstyle \frac{\mathcal{R}_{t}(N-1)}{\phi_{1}\sum_{i=1}^{N}\nu_{i}^{t}} \!-\! \frac{\mathcal{R}_{t}\nu_{i}^{t}}{\phi_{1}}(\frac{N-1}{\sum_{i=1}^{N}\nu_{i}^{t}})^{2}. \nonumber
\end{align} 

Hence, the theorem holds. \end{proof}

Based on the optimal privacy budget $\rho_{i}^{t*}$, we derive the central server's optimal reward $\mathcal{R}_{t}^{*}$ as summarized in Theorem~\ref{theorem_optimal_reward}. 
\begin{theorem}\label{theorem_optimal_reward}
\emph{In Stage \uppercase\expandafter{\romannumeral1}, based on the optimal privacy budget of each client $i$, i.e, $\small \boldsymbol{\rho_{i}^{t*}} \!=\! \{\rho_{1}^{t*}, \rho_{2}^{t*}, \ldots, \rho_{N}^{t*}\}$, the optimal reward of each global training iteration $\small \mathcal{R}_{t}^{*} \!=\! \sqrt{\frac{A_{t}\pi^{1-t}}{1-\gamma}}$, where discount factor $\small \pi \! \in \! (0,1)$ measures the decrement of the value of reward over time, constants $\small C_{i}\!=\!\frac{(N-1)\left[(\sum_{i=1}^{N}\nu_{i}^{t})-\nu_{i}^{t}(N-1)\right]}{\phi_{1}\left(\sum_{i=1}^{N}\nu_{i}^{t}\right)^{2}}$ and $\small A_{t}\!=\!\sum_{i=1}^{N} \frac{4d\gamma\beta C^{2}}{T^{2}\lambda^{2} N^{2}C_{i}|\mathcal{D}_{i}|^{2}}$.}
\end{theorem}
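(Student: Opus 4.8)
The plan is to apply backward induction: substitute the followers' best response from Theorem~\ref{theorem_optimal_rho} into the leader's cost function and then minimize over the reward schedule $\boldsymbol{\mathcal{R}}$. The crucial first observation I would make is that the optimal privacy budget in Eq.~(\ref{optimal_rho}) is \emph{linear} in $\mathcal{R}_{t}$. Factoring $\mathcal{R}_{t}$ out of Eq.~(\ref{optimal_rho}) and pulling the common factor $\frac{N-1}{\phi_{1}(\sum_{i=1}^{N}\nu_{i}^{t})^{2}}$ to the front collapses the two terms into $\rho_{i}^{t*} = C_{i}\,\mathcal{R}_{t}$, where $C_{i}$ is exactly the constant defined in the statement. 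This identification is what makes $C_{i}$ appear naturally, and it reduces the leader's problem to a single-variable-per-iteration optimization.

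Next I would substitute $\rho_{i}^{t} = C_{i}\mathcal{R}_{t}$ into the central server's cost in Eq.~(\ref{central_server_utility}). The $V^{2}$ contribution is independent of the reward and becomes a constant $\frac{2\beta\gamma V^{2}}{\lambda^{2}T}$. For the noise-dependent term, replacing $\rho_{i}^{t}$ by $C_{i}\mathcal{R}_{t}$ lets me pull $1/\mathcal{R}_{t}$ outside the inner sum over $i$; absorbing the prefactor $\frac{2\beta\gamma}{\lambda^{2}T^{2}}$ into that sum reproduces precisely $A_{t} = \sum_{i=1}^{N}\frac{4d\gamma\beta C^{2}}{T^{2}\lambda^{2}N^{2}C_{i}|\mathcal{D}_{i}|^{2}}$. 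The cost then takes the separable form
\[
\mathcal{U}_{T} = \frac{2\beta\gamma V^{2}}{\lambda^{2}T} + \sum_{t=1}^{T}\left(\frac{A_{t}}{\mathcal{R}_{t}} + (1-\gamma)\pi^{t-1}\mathcal{R}_{t}\right),
\]
in which the reward at distinct global iterations are completely decoupled.

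Since each summand depends only on its own $\mathcal{R}_{t}$, I would minimize term by term. Setting $\partial\mathcal{U}_{T}/\partial\mathcal{R}_{t} = -A_{t}/\mathcal{R}_{t}^{2} + (1-\gamma)\pi^{t-1} = 0$ and solving yields $\mathcal{R}_{t}^{*} = \sqrt{A_{t}\pi^{1-t}/(1-\gamma)}$, matching the claimed expression (using $\pi^{-(t-1)} = \pi^{1-t}$). To confirm this is a genuine minimizer I would check the second-order condition $\partial^{2}\mathcal{U}_{T}/\partial\mathcal{R}_{t}^{2} = 2A_{t}/\mathcal{R}_{t}^{3} > 0$, establishing strict convexity in each $\mathcal{R}_{t}$.

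The main obstacle is not the calculus but the \emph{positivity bookkeeping}: for $A_{t} > 0$ (so that the square root is real and the reward positive) and for the convexity argument to hold, I need each $C_{i} > 0$, which is equivalent to $\sum_{j=1}^{N}\nu_{j}^{t} > (N-1)\nu_{i}^{t}$ for every client $i$. I would therefore verify (or note as the governing feasibility condition) that the privacy values are sufficiently balanced so that every induced $\rho_{i}^{t*} = C_{i}\mathcal{R}_{t}^{*}$ stays in its feasible region; this is the one place where the clean closed form could fail, and it is where I would concentrate the rigor.
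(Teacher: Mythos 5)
Your proposal is correct and follows essentially the same route as the paper: substitute the followers' best response into the leader's cost, observe the problem decouples across iterations, and solve the first-order condition $-A_{t}/\mathcal{R}_{t}^{2}+(1-\gamma)\pi^{t-1}=0$ with the second-order check for convexity. Your explicit identification $\rho_{i}^{t*}=C_{i}\mathcal{R}_{t}$ and the feasibility caveat $C_{i}>0$ (i.e., $\sum_{j=1}^{N}\nu_{j}^{t}>(N-1)\nu_{i}^{t}$) make the derivation cleaner and slightly more rigorous than the paper's, which leaves the linearity implicit and does not state that positivity condition, but the argument is the same.
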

\begin{proof}
First of all, by substituting the optimal privacy budget $\rho_{i}^{t*}$ in Eq~(\ref{optimal_rho}) into the central server's cost function in Eq~(\ref{central_server_utility}), the cost function $\small \mathcal{U}_{T}(\boldsymbol{\mathcal{R}}, \boldsymbol{\rho}^{*})$ can be rewritten as follows:
\begin{align}\label{central_server_utility_with_optimal_rho}
\mathcal{U}_{T}(\boldsymbol{\mathcal{R}}, \boldsymbol{\rho^{*}})
&= \textstyle \frac{2\beta\gamma}{\lambda^{2}T^{2}} \sum_{t=1}^{T} (V^{2} \!+\! \frac{d}{N^{2}}\sum_{i=1}^{N} \frac{2 C^{2}}{|\mathcal{D}_{i}|^{2}\rho_{i}^{t*}}) \nonumber \\ 
& \quad \textstyle + (1-\gamma) \sum_{k=1}^{T}\pi^{k-1} \mathcal{R}_{k}, \nonumber \\ 
&= \textstyle \frac{2\beta\gamma}{\lambda^{2}T^{2}} (V^{2}T + \frac{2d C^{2}}{N^{2}} \textstyle \sum_{t=1}^{T} \sum_{i=1}^{N} \frac{1}{|\mathcal{D}_{i}|^{2}\rho_{i}^{t*}}) \nonumber \\ 
& \textstyle \quad + (1-\gamma) \textstyle \sum_{k=1}^{T}\pi^{k-1} \mathcal{R}_{k}, 
\end{align}
where $\small \boldsymbol{\rho^{*}} \!=\! \{\rho_{i}^{t*}, i \! \in \! \{1,2,\ldots, N\}, t \! \in \! \{1,2,\ldots, T\}\}$. The first-order of the central server's cost function can be derived as:
\begin{align}\label{central_server_utility_first_order}
\textstyle \frac{\partial \mathcal{U}_{T}}{\partial \mathcal{R}_{t}} = (1 \!-\! \gamma) \pi^{t-1} \!-\! \frac{4 d \gamma \beta C^{2}}{(T \lambda N)^{2}}  \sum_{i=1}^{N} \frac{\partial \rho_{i}^{t*}/\partial \mathcal{R}_{t}}{(|\mathcal{D}_{i}| \rho_{i}^{t*})^{2}}.
\end{align}

Then, we need to consider the existence of the solution of Eq (\ref{central_server_utility_first_order}). As the reward $\mathcal{R}_{t} \in [0, + \infty)$, we have:
\begin{align}\label{limit}
\textstyle \lim_{\mathcal{R}_{t} \rightarrow 0} \frac{\partial \mathcal{U}_{T}}{\partial \mathcal{R}_{t}} \!\rightarrow\! - \infty, \lim_{\mathcal{R}_{t} \rightarrow + \infty} \frac{\partial \mathcal{U}_{T}}{\partial \mathcal{R}_{t}} \!\rightarrow\! (1\!-\!\gamma)\pi^{t} \!>\! 0. 
\end{align}

From Eq (\ref{limit}), we demonstrate the existence of a solution for the equation $\partial \mathcal{U}_{T}/\partial \mathcal{R}_{t}\!=\!0$. Further, according to Eq (\ref{central_server_utility_first_order}), we derive the second-order of $\small \mathcal{U}_{T}(\boldsymbol{\mathcal{R}},\boldsymbol{\rho^{*}})$ with respect to $\mathcal{R}_{t}$:
\begin{align}\label{central_server_utility_second_order}
\textstyle \frac{\partial^{2} \mathcal{U}_{T}}{\partial \mathcal{R}_{t}^{2}} = \frac{8 d \gamma \beta C^{2}}{(T\lambda N)^{2}} \sum_{i=1}^{N} (\frac{\partial \rho_{i}^{t*}}{\partial \mathcal{R}_{t}}\frac{1}{|\mathcal{D}_{i}|})^{2}\! (\frac{1}{\rho_{i}^{t*}})^{3} > 0. 
\end{align}

Thus, based on Eqs (\ref{limit})-(\ref{central_server_utility_second_order}), we obtain the optimal reward $\mathcal{R}_{t}^{*}$ by solving the first-order equation $\small \partial \mathcal{U}_{T}/\partial \mathcal{R}_{t}\!=\!0$, i.e.,
\begin{align}
\textstyle \frac{\partial \mathcal{U}_{T}}{\partial \mathcal{R}_{t}} &= \textstyle (1 \!-\! \gamma) \pi^{t-1} \!-\! \frac{4 d \gamma \beta C^{2}}{(T \lambda N)^{2}} \sum_{i=1}^{N} \frac{\partial \rho_{i}^{t*}/\partial \mathcal{R}_{t}}{(|\mathcal{D}_{i}| \rho_{i}^{t*})^{2}} \nonumber \\
&= \textstyle (1 \!-\! \gamma)\pi^{t-1} \!-\! \frac{A_{t}}{\mathcal{R}_{t}^{2}}=0 
\Rightarrow \! \mathcal{R}_{t} \!=\! \sqrt{\frac{A_{t}}{(1-\gamma)\pi^{t-1}}}, \nonumber
\end{align}
where known constants $C_{i}\!=\!\frac{(N-1)\left[(\sum_{i=1}^{N}\nu_{i}^{t})-\nu_{i}^{t}(N-1)\right]}{\phi_{1}\left(\sum_{i=1}^{N}\nu_{i}^{t}\right)^{2}}$ and $A_{t} \\ =\sum_{i=1}^{N} \frac{4d\gamma\beta C^{2}}{T^{2}\lambda^{2} N^{2}C_{i}|\mathcal{D}_{i}|^{2}}$.  \end{proof}

As shown in Theorem \ref{theorem_optimal_reward}, the optimal reward $\mathcal{R}_{t}^{*}$ increases with $t$, indicating that the central server necessitates data of superior quality to attain the desired accuracy level when approaching the end of the time horizon $T$.

\begin{definition}\label{definition2}
\emph{(Stackelberg Nash Equilibrium) The strategy profile $(\boldsymbol{\mathcal{R}}^{*},\{\boldsymbol{\rho_{i}^{*}}$, $i \!\in\! \{1,2, \ldots, N\}\})$, with $\boldsymbol{\mathcal{R}}^{*} \!=\! \{\mathcal{R}_{t}^{*}, t \in \{1, \\ 2,\ldots,T\}\}$ and $\boldsymbol{\rho_{i}^{*}} \!\!=\!\! \{\rho_{i}^{t*},t \in \{1,2,\ldots, T\} \}$ constitutes a Stackelberg Nash Equilibrium if for any reward $\boldsymbol{\mathcal{R}} \! \in \! \mathbb{R}^{T}$ and any privacy budget $\rho_{i}^{t} \geq 0$:}
\begin{align}\label{eq29&30}
\mathcal{U}_{T}(\boldsymbol{\mathcal{R}^{*}}, \boldsymbol{\rho^{*}}) &\leq \mathcal{U}_{T}(\boldsymbol{\mathcal{R}}, \boldsymbol{\rho^{*}}),  \\
\mathcal{U}_{i}^{t}(\rho_{i}^{t*}, \boldsymbol{\rho_{\text{-}i}^{t*}}, \mathcal{R}_{t}^{*}) &\geq \mathcal{U}_{i}^{t}(\rho_{i}^{t}, \boldsymbol{\rho_{\text{-}i}^{t*}}, \mathcal{R}_{t}^{*}). 
\end{align}
\end{definition}

\begin{figure}[t]
\setlength{\abovecaptionskip}{6pt} 
    \centering
    \subfloat[Fitting Curve on MNIST]{
        \label{fit_mnist}
        \includegraphics[width=0.24\textwidth, trim=50 30 60 50,clip]{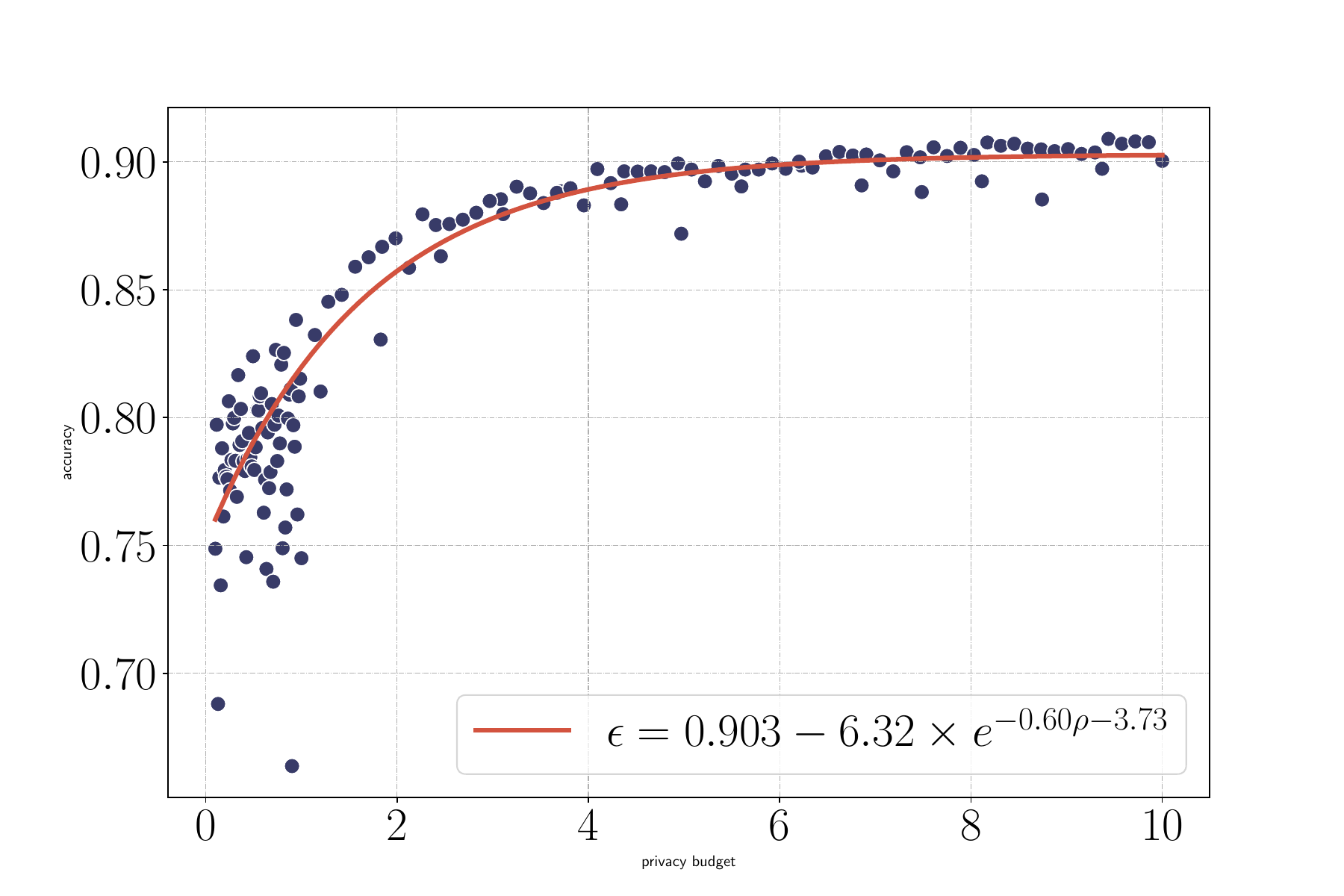}}
    \subfloat[Fitting Curve on EMNIST]{
        \label{fit_emnist}
        \includegraphics[width=0.24\textwidth, trim=60 40 60 50,clip]{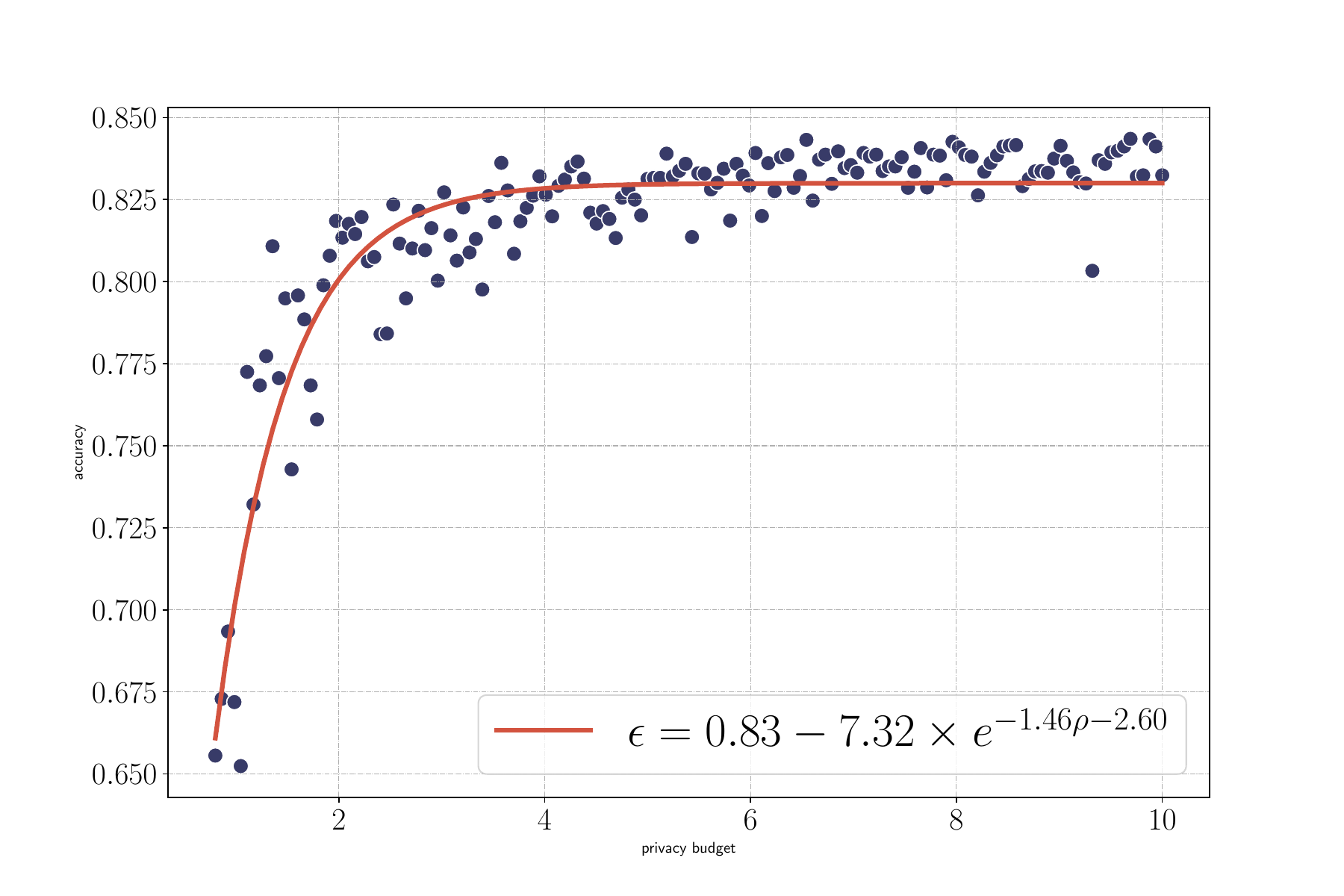}}
\caption{The relationship between the privacy budget $\rho$ and model accuracy on MNIST and EMNIST dataset}
\label{fit}
\vspace{-13pt}    
\end{figure}

\begin{theorem}\label{thm3}
\emph{The above two-stage Stackelberg game possesses a Stackelberg Nash Equilibrium.}
\end{theorem}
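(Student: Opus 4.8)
The plan is to establish the two defining inequalities of Definition~\ref{definition2} by assembling the stage-wise results already obtained through backward induction. The argument splits naturally into verifying followers' optimality (Stage~\uppercase\expandafter{\romannumeral2}) and leader optimality (Stage~\uppercase\expandafter{\romannumeral1}), after which existence of the equilibrium follows immediately from combining the two.

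First, I would treat the followers' subgame. Fixing any reward $\mathcal{R}_t$, in particular the optimal $\mathcal{R}_t^*$, and any competitors' profile $\boldsymbol{\rho_{\text{-}i}^{t*}}$, Theorem~\ref{theorem_optimal_rho} already shows that each client's utility $\mathcal{U}_i^t$ is strictly concave in $\rho_i^t$, since its second-order derivative in Eq.~(\ref{client_utility_second_order}) is strictly negative on the feasible region. Consequently the stationary point $\rho_i^{t*}$ obtained by solving $\partial \mathcal{U}_i^t / \partial \rho_i^t = 0$ is the unique global maximizer, which is precisely the second inequality of Definition~\ref{definition2}, namely $\mathcal{U}_i^t(\rho_i^{t*}, \boldsymbol{\rho_{\text{-}i}^{t*}}, \mathcal{R}_t^*) \geq \mathcal{U}_i^t(\rho_i^{t}, \boldsymbol{\rho_{\text{-}i}^{t*}}, \mathcal{R}_t^*)$ for every $\rho_i^t \geq 0$.

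Next I would handle the leader's problem. Substituting the followers' best responses $\boldsymbol{\rho^*}$ into the server's cost yields the reduced objective $\mathcal{U}_T(\boldsymbol{\mathcal{R}}, \boldsymbol{\rho^*})$ in Eq.~(\ref{central_server_utility_with_optimal_rho}), which is separable across the global iterations: each reward $\mathcal{R}_t$ enters only its own summand. For each $t$, Theorem~\ref{theorem_optimal_reward} shows the second-order derivative in Eq.~(\ref{central_server_utility_second_order}) is strictly positive, so the per-iteration cost is strictly convex in $\mathcal{R}_t$, while the limit analysis in Eq.~(\ref{limit}) shows the first-order derivative passes from $-\infty$ to a positive value, guaranteeing a unique interior stationary point. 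Hence $\mathcal{R}_t^*$ is the unique minimizer of each summand, and by separability the full vector $\boldsymbol{\mathcal{R}}^*$ uniquely minimizes $\mathcal{U}_T(\cdot, \boldsymbol{\rho^*})$, establishing the first inequality $\mathcal{U}_T(\boldsymbol{\mathcal{R}^*}, \boldsymbol{\rho^*}) \leq \mathcal{U}_T(\boldsymbol{\mathcal{R}}, \boldsymbol{\rho^*})$ for all $\boldsymbol{\mathcal{R}}$. Combining both inequalities, the profile $(\boldsymbol{\mathcal{R}}^*, \{\boldsymbol{\rho_i^*}\})$ satisfies Definition~\ref{definition2} and is therefore a Stackelberg Nash Equilibrium.

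The step I expect to require the most care is the leader's side, specifically confirming that substituting the followers' closed-form response leaves the reduced objective well-behaved. Two things must be checked: that the objective genuinely decouples across $t$, so that the single-variable convexity and existence arguments of Theorem~\ref{theorem_optimal_reward} suffice iteration by iteration; and that both equilibrium quantities remain feasible, in particular that $C_i > 0$ and that the argument of the square root defining $\mathcal{R}_t^*$ is nonnegative, ensuring $\rho_i^{t*} \geq 0$ and $\mathcal{R}_t^* \geq 0$. Once these positivity conditions on the privacy-value constants are in place, the interior stationary points identified in the two theorems are the true constrained optimizers, and the equilibrium is well-defined.
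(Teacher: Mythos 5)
Your proposal follows essentially the same route as the paper's own proof: backward induction, invoking the strict concavity of each client's utility from Theorem~\ref{theorem_optimal_rho} to get the follower inequality and the strict convexity of the reduced server cost from Theorem~\ref{theorem_optimal_reward} to get the leader inequality, then combining the two to meet Definition~\ref{definition2}. Your added checks on separability across $t$ and on feasibility (that $C_i>0$ and the square-root argument is nonnegative, so $\rho_i^{t*}\geq 0$ and $\mathcal{R}_t^*\geq 0$) are sensible points the paper leaves implicit, but they do not change the argument.
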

\begin{proof}
Based on Theorem \ref{theorem_optimal_rho}, we deduce that there exists the optimal strategy of privacy budget $\rho_{i}^{t*}$ for each client $i \! \in \! \{1, 2, \\ \ldots, N\}$ given any reward $\mathcal{R}_{t} \!\in\! \mathbb{R}$. Then, we need to prove that there exists an optimal reward $\mathcal{R}_{t}^{*}$ for the central server's cost function $\small \mathcal{U}_{T}$ in the premise of optimal privacy budget vector $\small \boldsymbol{\rho_{i}^{t*}}\!=\!\{\rho_{i}^{t*}, i \! \in \! \{1, 2, \ldots, N\}\}$. According to the proof of Theorem \ref{theorem_optimal_reward}, $\small \partial^{2} \mathcal{U}_{T}/\partial \mathcal{R}_{t}^{2} \! > \! 0$ indicates that $\small  \mathcal{U}_{T}(\boldsymbol{\mathcal{R}}, \boldsymbol{\rho^{*}})$ is convex. Thus, an optimal reward $\small \mathcal{R}_{t}^{*}$ for central server exists by solving the first-order equation $\small \partial \mathcal{U}_{T}/\partial \mathcal{R}_{t} \! = \! 0$ based on the optimal privacy budget vector $\small \boldsymbol{\rho_{i}^{t*}}\!=\!\{\rho_{i}^{t*}, i \! \in \! \{1,2, \ldots, N\}\}$. In other words, the strategy $\mathcal{R}_{t}^{*}$ and $\boldsymbol{\rho_{i}^{t*}}$ are mutual optimal strategy for each selected client $i$ and the central server. Thus, the optimal strategy profile $\small (\boldsymbol{\mathcal{R}^{*}}, \{\boldsymbol{\rho_{i}^{*}}$, $i \! \in \! \{1,2, \ldots, N\}\})$, with $\small \boldsymbol{\mathcal{R}^{*}} \!\!=\! \{\mathcal{R}_{t}^{*}, t \! \in \! \{1, 2, \ldots, T \}\}\!$ and $\small \boldsymbol{\rho_{i}^{*}} \!=\! \{\rho_{i}^{t*},t \! \in \! \{1,2, \ldots, T\}\}$ of the two-stage Stackelberg game possesses a Stackelberg Nash Equilibrium. Hence, the theorem holds. \end{proof}

Theorem \ref{theorem_optimal_reward} reveals that the optimal privacy budget of the selected clients in Theorem \ref{theorem_optimal_rho} and optimal reward of the central server in Theorem \ref{theorem_optimal_reward} are mutually optimal, which leads to the steady state of the FL system. The overall framework process is summarized in Algorithm \ref{alg1}.

\begin{algorithm}[t]
\small
\caption{QI-DPFL} 
\label{alg1}
\hspace*{0.02in} {\bf Input: } 
The total client size $H$, hyperparameter $\gamma$ and $\phi_{k}$, $k \!\in\! \{1, 2, 3, 4\}$, discount factor $\pi \!\in\! (0,1)$, local training epoch $L$, global training iteration $T$ and convex function $I$.
\begin{algorithmic}[1]
\State The central server claims the task request by outsourcing the FL task and initializes the global model.
\State \textbf{Client Selection Layer:}
\For{each client $\small h \in \{1, 2, \ldots, H\}$}
\State Calculate Wasserstein distance
\If{$\small \theta_{h} < \theta_{\text{th}}$}
\State Add $h$-th client into the selected client set
\EndIf
\EndFor
\State \textbf{Stackelberg Game-based Model Training Layer:}
\For{$t=0$ to $T$}
    \For{selected client $\small i \!\in\!\{1,2, \ldots, N\}$ in parallel}
    \State Collect privacy data and perform local training:
        \For {$l=1$ to $L$}
        \State $\small \boldsymbol{w_{i}^{l+1}(t)}=\boldsymbol{w_{i}^{l}(t)}-\eta_{i} \nabla F_{i}(\boldsymbol{w_{i}^{l}(t)})$
        \EndFor
    \State Determine the optimal privacy budget $\small \rho_{i}^{t*}$
    \State Perturb the local model and upload $\small \nabla \widetilde{F}_{i}(\boldsymbol{w_{i}(t)})$
    \EndFor
\State $\small \textstyle \boldsymbol{w(t+1)}=\boldsymbol{w(t)}-\eta \cdot \frac{1}{N} \sum_{i=1}^{N}  \nabla \widetilde{F}_{i}(\boldsymbol{w(t)})$
\State Distribute reward $\small \mathcal{R}_{t}^{*}$ to all selected clients
\EndFor 
\end{algorithmic}
\end{algorithm}

\subsection{Range Analysis of the Reward}

In this section, we will analyze the range of the reward under the Stackelberg Nash Equilibrium to guarantee the pre-set model accuracy while simultaneously minimizing the compensation provided to clients. Firstly, we introduce the additivity property of the privacy budget $\rho_{i}^{t}$  of each selected client $i$ as summarized in Lemma \ref{lemma1} \cite{hu2020trading}.

\begin{lemma}\label{lemma1}
\emph{Hypothesise two mechanism satisfy $\rho_{1}$-$z$CDP and $\rho_{2}$-$z$CDP, their composition satisfies $\rho_{1}+\rho_{2}$-$z$CDP.}
\end{lemma}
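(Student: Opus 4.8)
The plan is to establish the additivity (composition) property of $\rho$-$z$CDP directly from Definition~\ref{dp_definition} by working with the moment generating function of the privacy loss. First I would let $\mathcal{M}_{1}$ and $\mathcal{M}_{2}$ be two mechanisms satisfying $\rho_{1}$-$z$CDP and $\rho_{2}$-$z$CDP respectively, and consider their composition $\mathcal{M}(\mathcal{D}) = (\mathcal{M}_{1}(\mathcal{D}), \mathcal{M}_{2}(\mathcal{D}))$. For any two adjacent datasets $\mathcal{D}, \mathcal{D}'$, the key observation is that the privacy loss of the composed mechanism decomposes additively: writing $\mathcal{L}_{p}$ for the privacy loss of $\mathcal{M}$ and $\mathcal{L}_{p}^{(1)}, \mathcal{L}_{p}^{(2)}$ for those of $\mathcal{M}_{1}, \mathcal{M}_{2}$, one has $\mathcal{L}_{p} = \mathcal{L}_{p}^{(1)} + \mathcal{L}_{p}^{(2)}$ because the log-likelihood ratio of a product distribution is the sum of the individual log-likelihood ratios.

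The central computation would then be to bound $\mathbb{E}[e^{(\alpha-1)\mathcal{L}_{p}}]$ for arbitrary $\alpha \in (1,\infty)$. Substituting the decomposition gives
\begin{align}
\mathbb{E}[e^{(\alpha-1)\mathcal{L}_{p}}] = \mathbb{E}[e^{(\alpha-1)(\mathcal{L}_{p}^{(1)} + \mathcal{L}_{p}^{(2)})}] = \mathbb{E}[e^{(\alpha-1)\mathcal{L}_{p}^{(1)}} \cdot e^{(\alpha-1)\mathcal{L}_{p}^{(2)}}]. \nonumber
\end{align}
The main obstacle is that the two factors are generally \emph{not} independent, so the expectation of the product is not simply the product of expectations; one cannot naively multiply the two $z$CDP bounds. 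The standard way around this is to use the adaptivity structure: I would condition on the output of $\mathcal{M}_{1}$ and invoke the $\rho_{2}$-$z$CDP guarantee of $\mathcal{M}_{2}$ (which holds uniformly over the conditioning), then peel off the inner expectation and apply the $\rho_{1}$-$z$CDP guarantee of $\mathcal{M}_{1}$ to the residual. This iterated-expectation (tower property) argument replaces the false independence assumption and yields $\mathbb{E}[e^{(\alpha-1)\mathcal{L}_{p}}] \leq e^{(\alpha-1)(\rho_{1}\alpha)} \cdot e^{(\alpha-1)(\rho_{2}\alpha)}$.

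Finally I would combine the exponents to obtain $\mathbb{E}[e^{(\alpha-1)\mathcal{L}_{p}}] \leq e^{(\alpha-1)((\rho_{1}+\rho_{2})\alpha)}$, which is exactly the defining inequality of $(\rho_{1}+\rho_{2})$-$z$CDP from Definition~\ref{dp_definition}. Since $\alpha \in (1,\infty)$ was arbitrary, the composed mechanism satisfies $(\rho_{1}+\rho_{2})$-$z$CDP, establishing the lemma. I expect the cleanest presentation to simply cite the conditioning step at the level of detail appropriate for a conference paper, as the result is classical and the technical subtlety lies entirely in handling the dependence between the two privacy-loss terms correctly rather than in any delicate estimate.
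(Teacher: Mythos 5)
Your argument is correct: the additive decomposition of the privacy loss for the composed mechanism, followed by the tower-property/conditioning step to handle the dependence between the two loss terms (rather than a false appeal to independence), is exactly the canonical proof of $z$CDP composition from Bun and Steinke. Note, however, that the paper itself offers no proof of Lemma~\ref{lemma1} at all --- it imports the result by citation to \cite{hu2020trading} --- so there is nothing in the paper to compare against; your sketch simply reproduces, correctly, the standard argument from the cited literature.
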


According to Lemma 1, in the $t$-th global iteration, it is equivalent that the global model satisfies $\rho_{t}$-$z$CDP, where the global privacy budget $\rho_{t}$ equals the summation of each client's privacy budget $\small \rho_{i}^{t}$ (i.e., $\small \rho_{t} \!=\! \sum_{i=1}^{N}\! \rho_{i}^{t}$). Naturally, the larger privacy budget usually results in better model performance. To quantify the global model accuracy as a function of the privacy budget, we acquire the test accuracy of the training model with different privacy budgets based on different real-world datasets (i.e., MNIST, Cifar10, and EMNIST datasets). Fig. \ref{fit} illustrates the MNIST and EMNIST datasets as an example. We observe that, in $t$-th global iteration, the global model accuracy $\epsilon_{t}$ can be regarded as a simplified concave function concerning the global privacy budget $\rho_{t}$ (i.e., $\epsilon_{t} \!=\! I(\rho_{t}) \!=\! I_{1} \!-\! I_{2} \!\times\! e^{-I_{3}\rho_{t} - I_{4}}$), where $\small I_{k}$, $\small k \!\in\! \{1,2,3, 4\}$ are corresponding constants.

\begin{remark}
\emph{(Reward Range) Based on the optimal strategy of each selected client, the global privacy budget $\rho_{t}$ can be calculated as:
$\rho_{t} \!=\! \frac{\mathcal{R}_{t}(N-1)}{\phi_{1}\sum_{i=1}^{N}\nu_{i}^{t}}$, based on which, the model accuracy $\epsilon_{t}$ at $t$-th global iteration can be derived as follows: }
\begin{align}\label{model_accuracy}
\epsilon_{t} &= I_{1} - I_{2} \times e^{-I_{3}\rho_{t} - I_{4}} \nonumber \\[-6pt]
&= I_{1} - I_{2} \times e^{-\frac{I_{3}\mathcal{R}_{t}(N-1)}{\phi_{1}\sum_{i=1}^{N}\nu_{i}^{t}} - I_{4}} \geq \epsilon, 
\end{align}
\emph{where $\epsilon$ is the predefined model accuracy that should be reached after $T$ iterations of model training. Thus, there exists a lower and an upper bound for the allocated reward $\mathcal{R}_{t}$ in the $t$-th global iteration, i.e.,}
\begin{align}\label{reward_range}
\textstyle \frac{\phi_{1}[\log (\frac{I_{2}}{I_{1}-\epsilon})-I_{4}]}{(N-1)I_{3}/\sum_{i=1}^{N}\!\nu_{i}^{t}} \leq \mathcal{R}_{t} \leq  \frac{\phi_{1}[\log (\frac{I_{2}}{I_{1}-\epsilon_{\text{max}}})-I_{4}]}{(N-1)I_{3}/\sum_{i=1}^{N}\!\nu_{i}^{t}}.
\end{align}
\end{remark}

\begin{figure*}
\setlength{\abovecaptionskip}{2pt} 
    \centering
    \subfloat[Accuracy on different models]{
        \label{EMNIST_IID_acc_different_model}
        \includegraphics[width=0.249\textwidth, trim=10 0 75 20,clip]{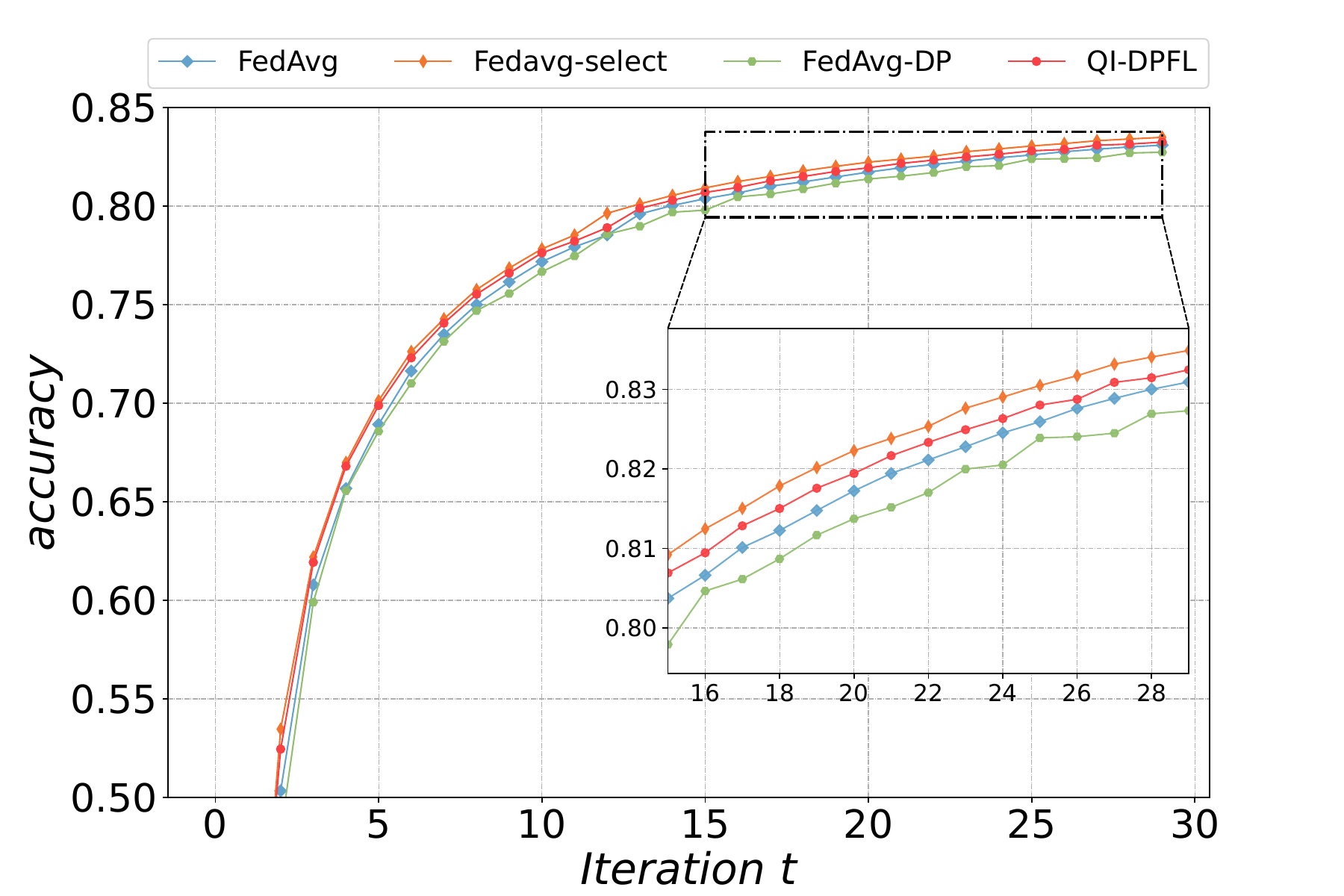}}
    \subfloat[Accuracy on different strategies]{
        \label{EMNIST_IID_acc_different_strategy}
        \includegraphics[width=0.249\textwidth, trim=10 0 75 20,clip]{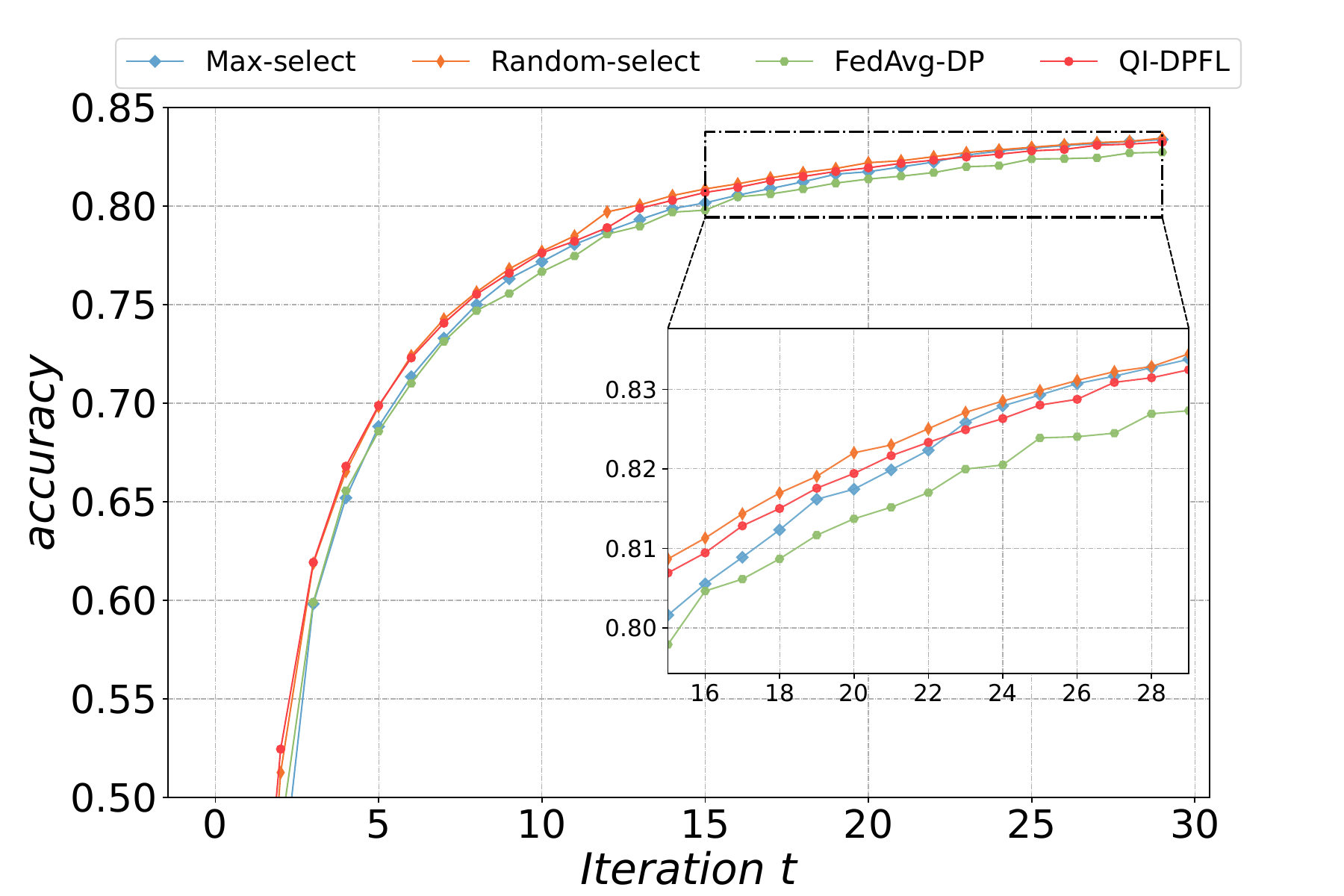}}
    \subfloat[Cost of central server]{
        \label{EMNIST_IID_cost}
        \includegraphics[width=0.249\textwidth, trim=10 0 75 15,clip]{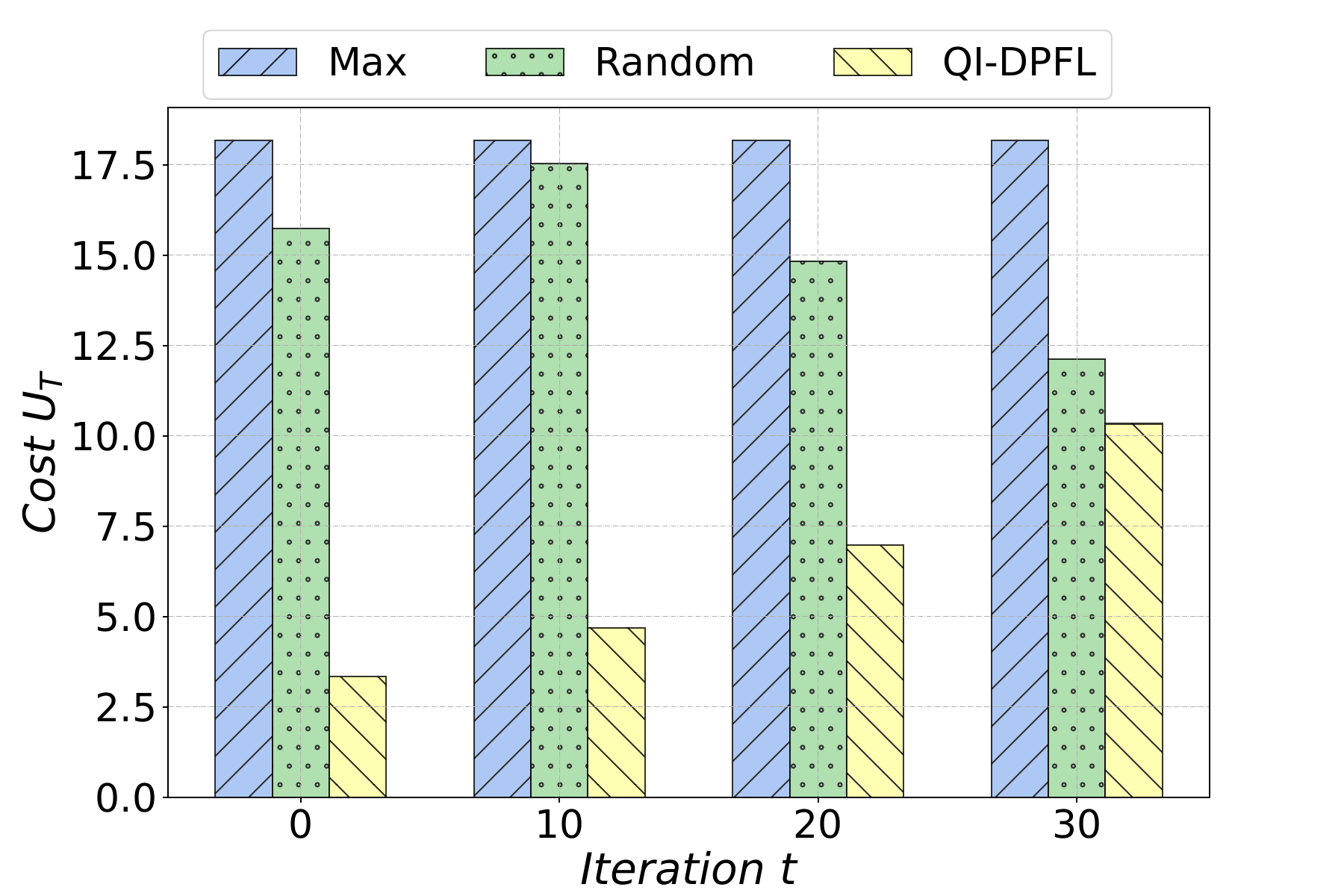}}
    \subfloat[Accuracy on different strategies]{
        \label{EMNIST_IID_reward}
        \includegraphics[width=0.249\textwidth, trim=10 0 75 15,clip]{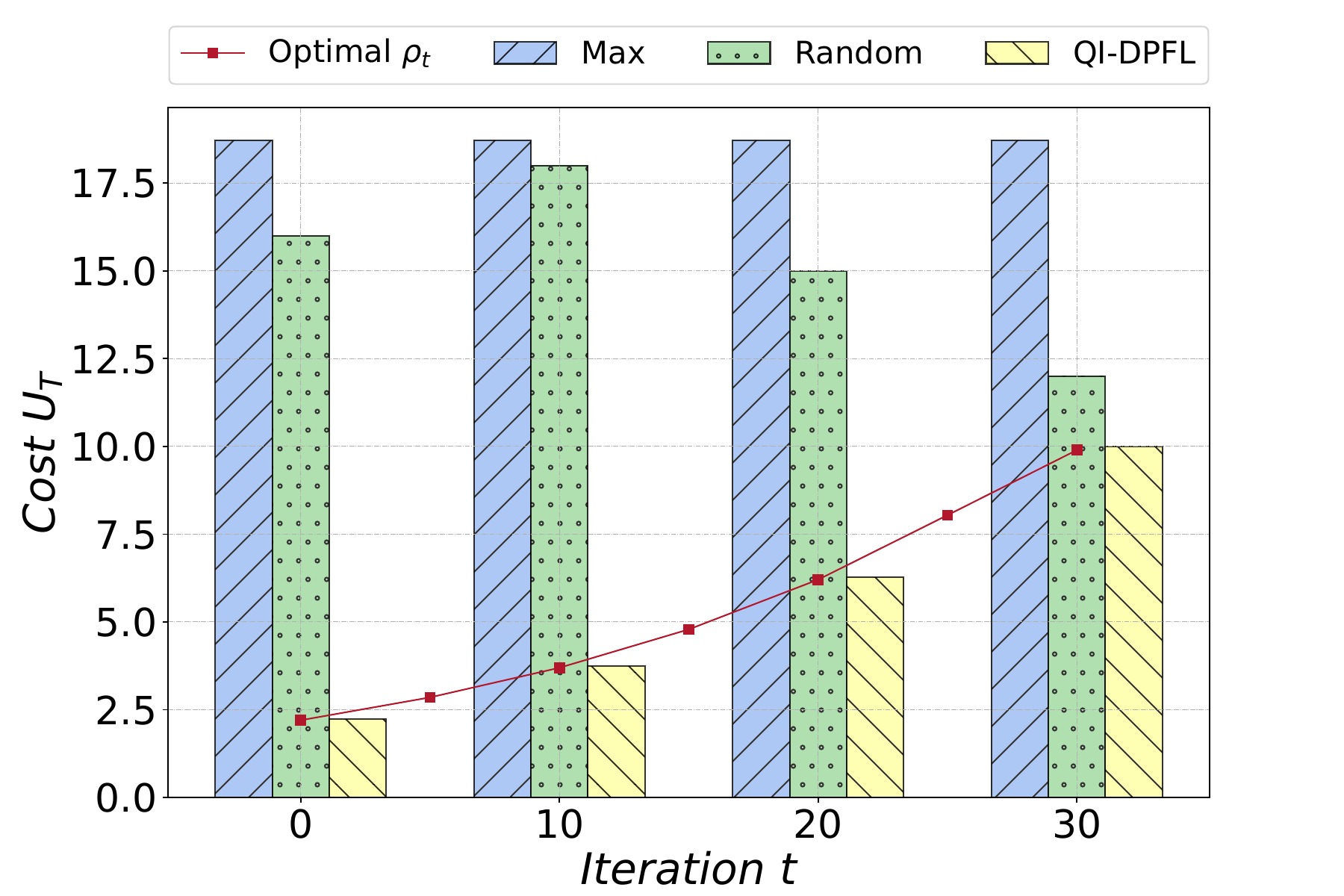}}
    \caption{The performance of different models on EMNIST dataset with IID data distribution.}
    \label{EMNIST_IID}
\vspace{-18pt}    
\end{figure*}

\begin{figure*}
\setlength{\abovecaptionskip}{2pt} 
    \centering
    \subfloat[Accuracy on different models]{
        \label{EMNIST_Non_IID_acc_different_model}
        \includegraphics[width=0.249\textwidth, trim=35 0 75 20,clip]{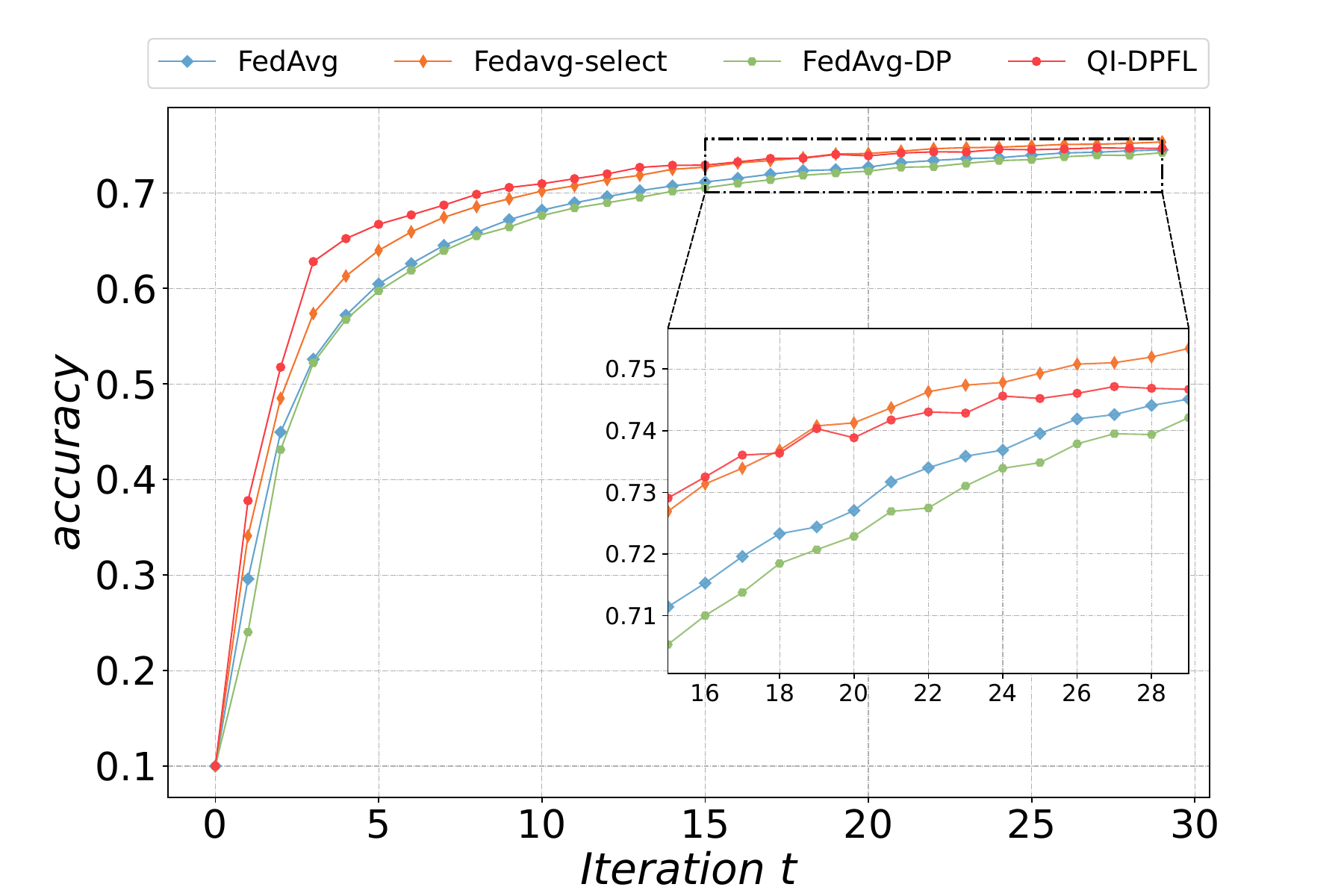}}
    \subfloat[Accuracy on different strategies]{
        \label{EMNIST_Non_IID_acc_different_strategy}
        \includegraphics[width=0.249\textwidth, trim=35 0 75 20,clip]{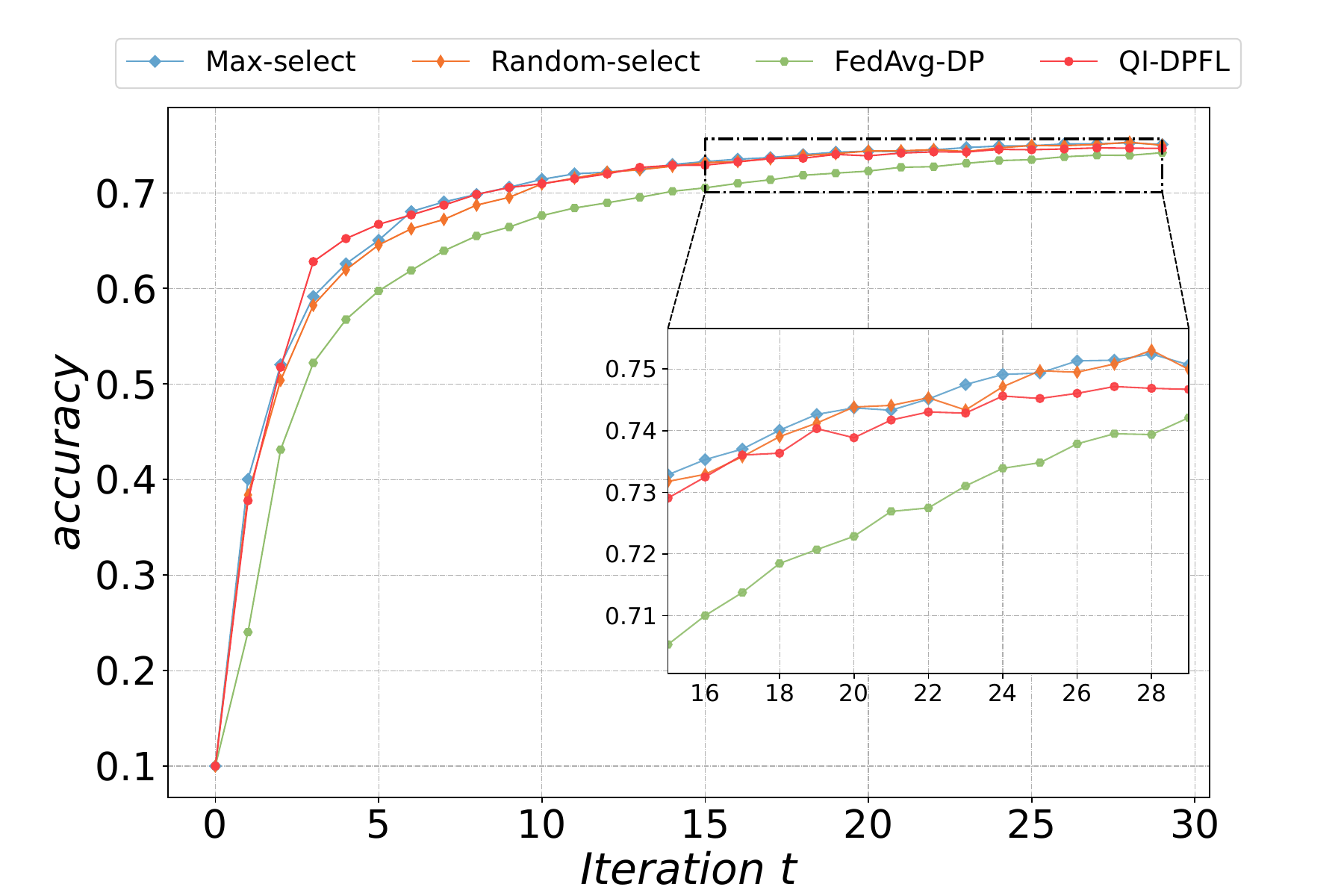}}
    \subfloat[Cost of central server]{
        \label{EMNIST_Non_IID_cost}
        \includegraphics[width=0.249\textwidth, trim=35 0 75 15,clip]{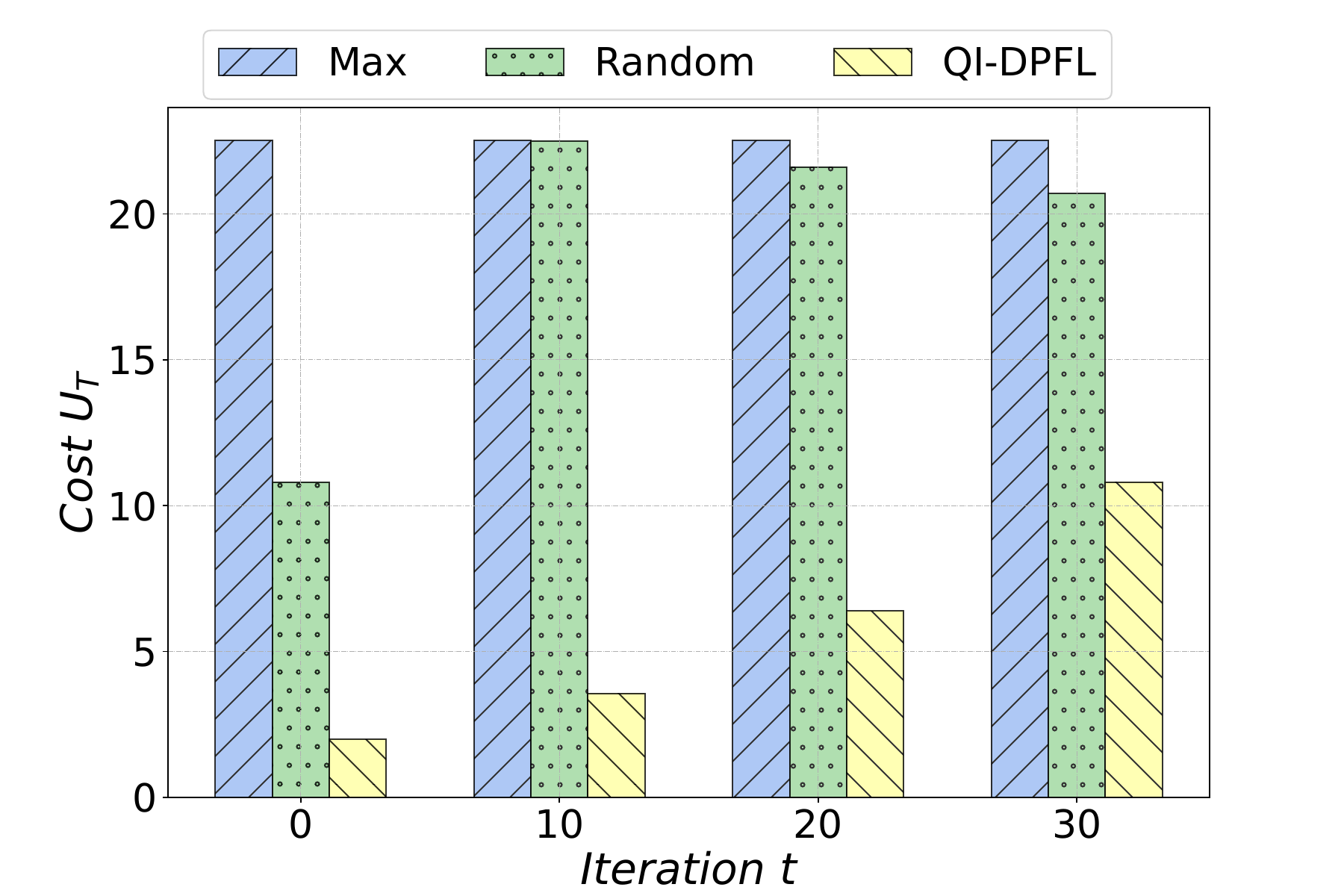}}
    \subfloat[Accuracy on different strategies]{
        \label{EMNIST_Non_IID_reward}
        \includegraphics[width=0.249\textwidth, trim=35 0 75 15,clip]{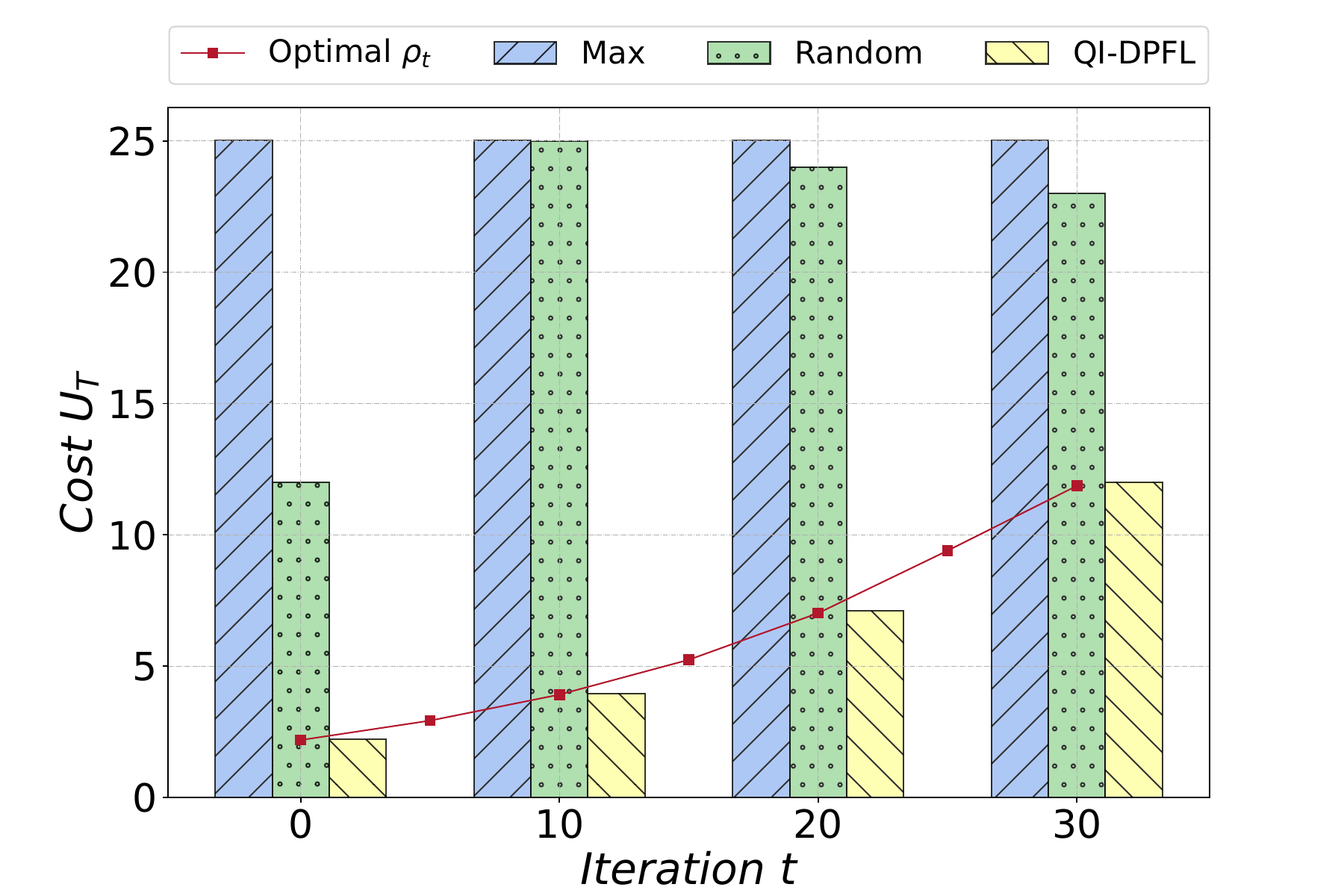}}
    \caption{The performance of different models on EMNIST dataset with Non-IID data distribution.}
    \label{EMNIST_Non_IID}
\vspace{-18pt}    
\end{figure*}

\begin{figure*}
\setlength{\abovecaptionskip}{2pt}
    \centering
    \subfloat[Accuracy on different models]{
        \label{Cifar10_IID_acc_different_model}
        \includegraphics[width=0.249\textwidth, trim=35 0 75 20,clip]{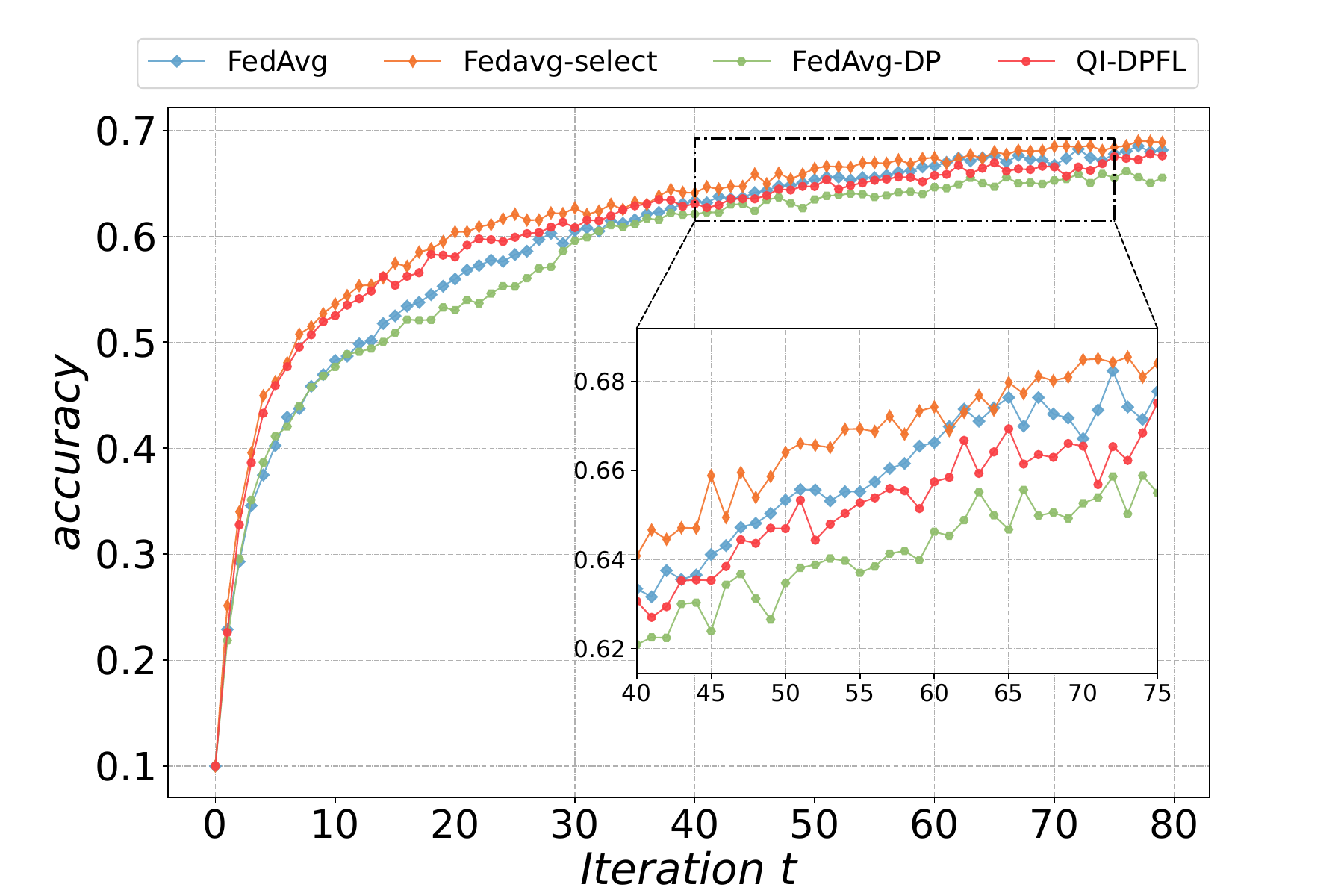}}
    \subfloat[Accuracy on different strategies]{
        \label{Cifar10_IID_acc_different_strategy}
        \includegraphics[width=0.249\textwidth, trim=35 0 75 20,clip]{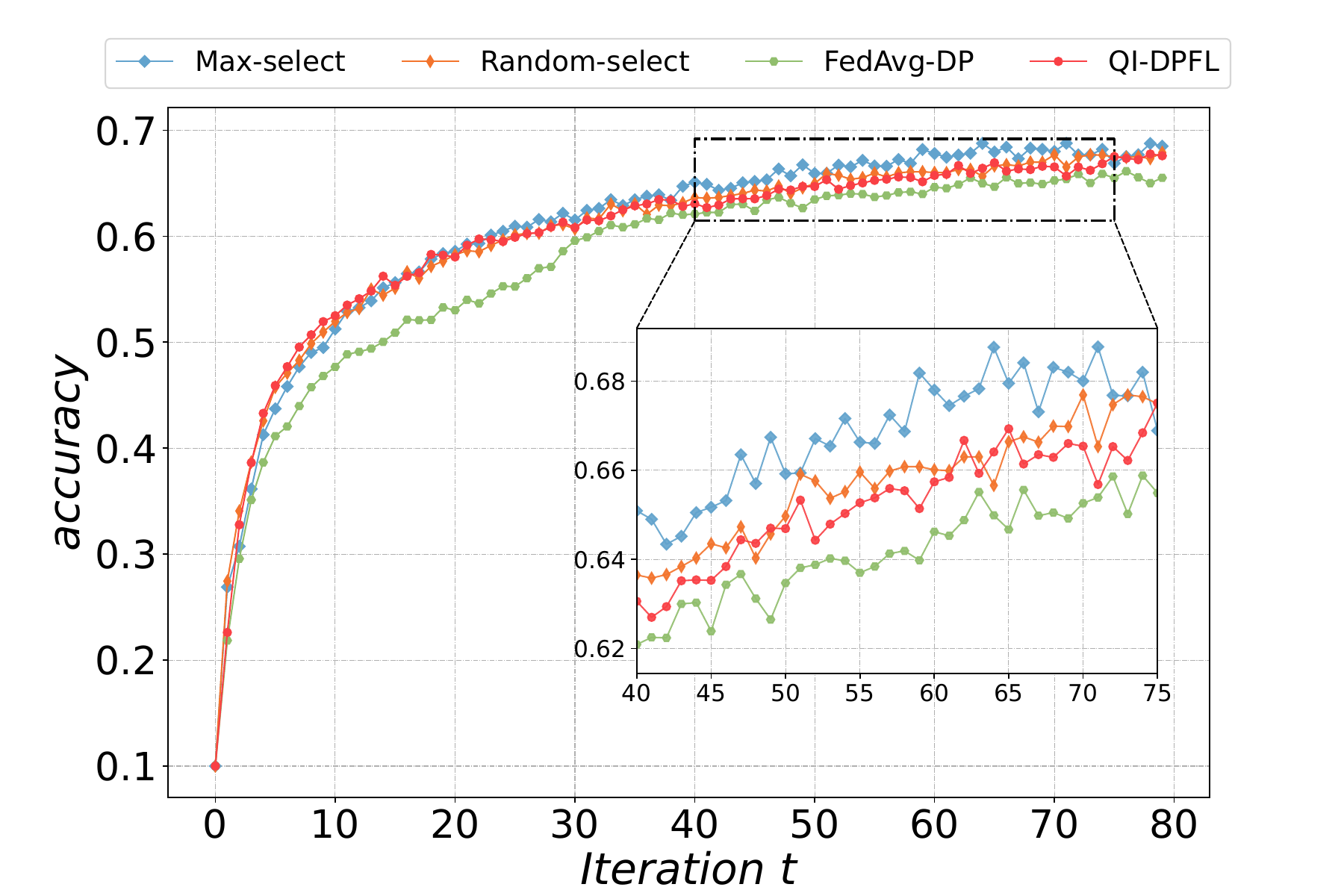}}
    \subfloat[Cost of central server]{
        \label{Cifar10_IID_cost}
        \includegraphics[width=0.249\textwidth, trim=35 0 75 15,clip]{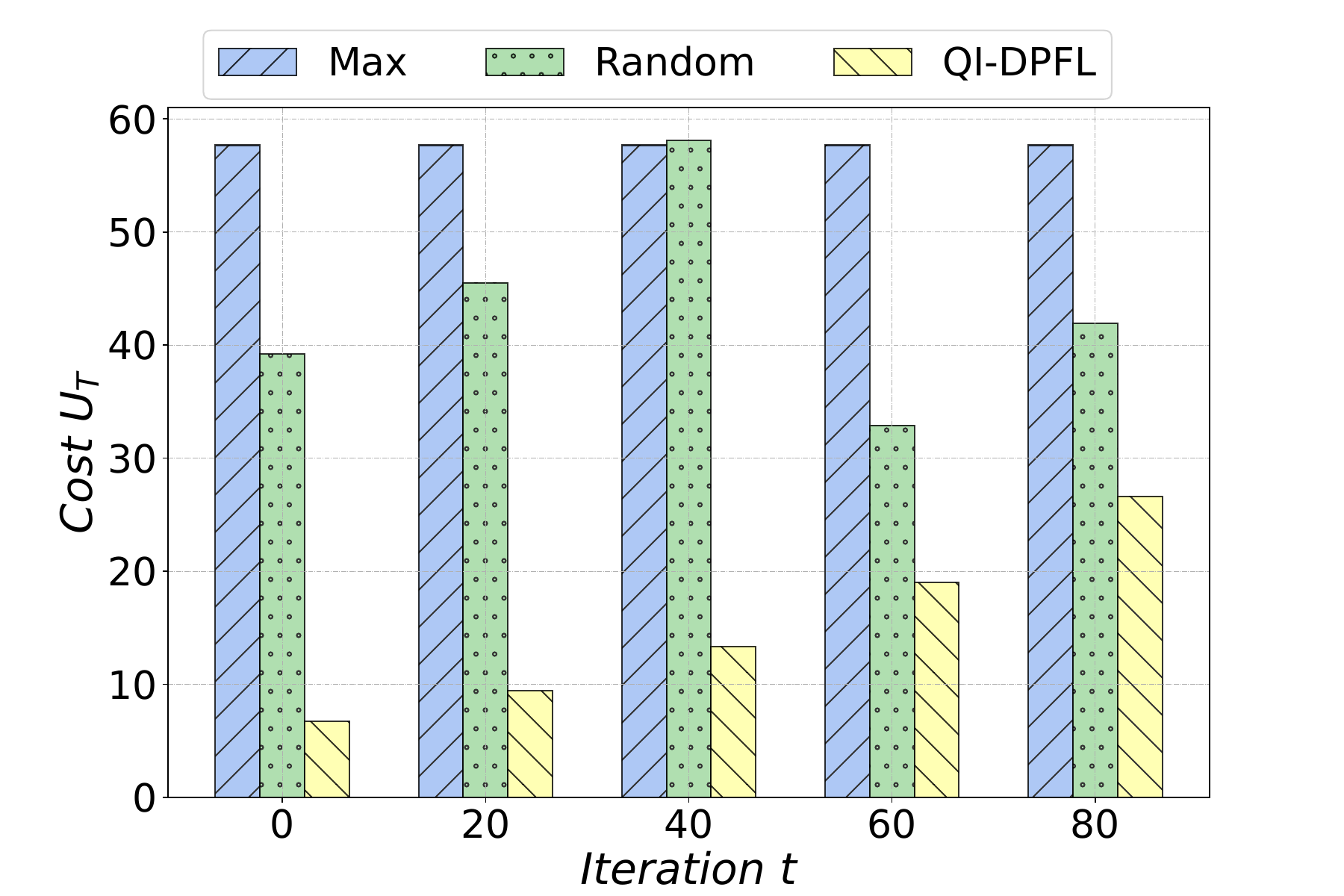}}
    \subfloat[Reward]{
        \label{Cifar10_IID_reward}
        \includegraphics[width=0.249\textwidth, trim=35 0 75 15,clip]{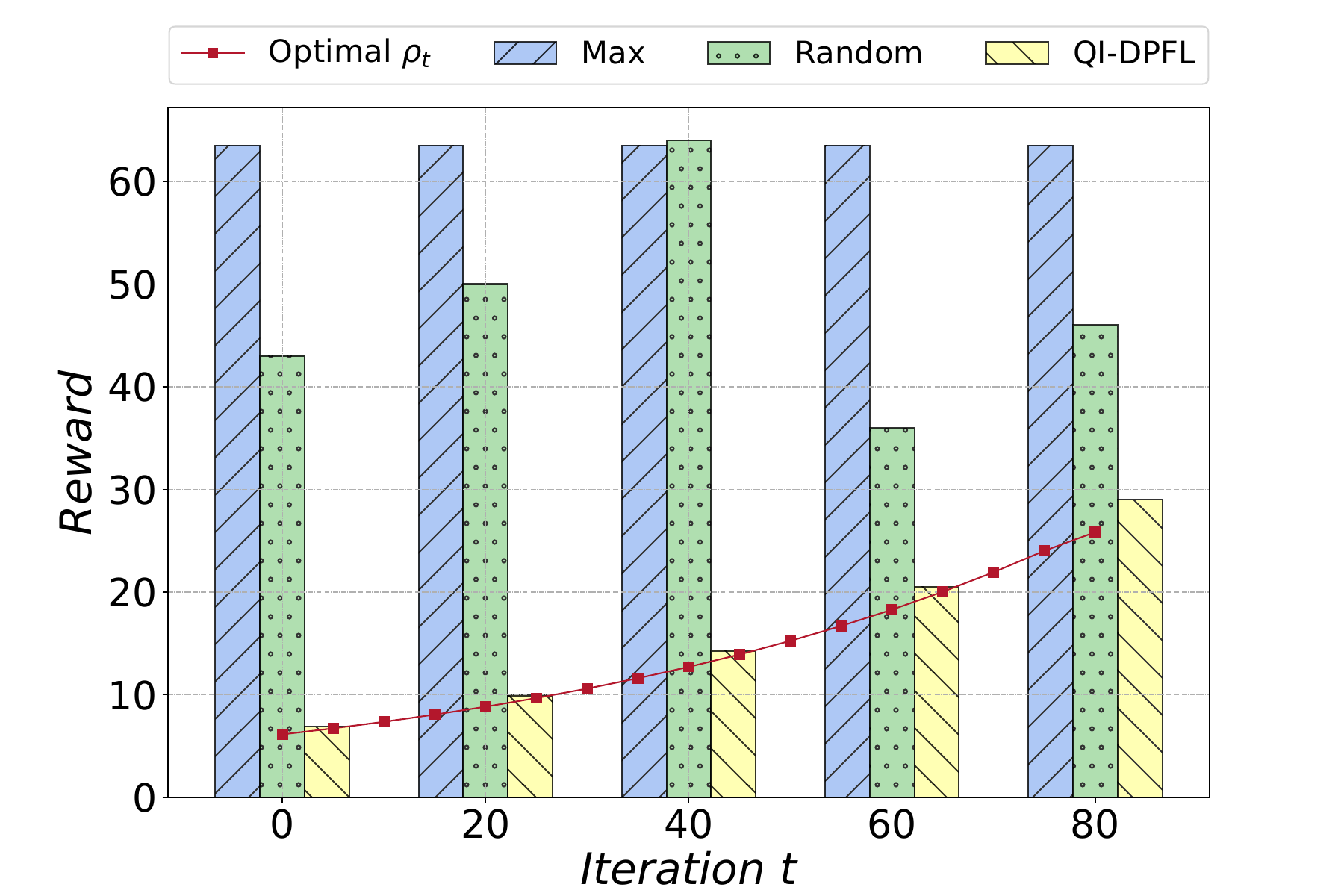}}
    \caption{The performance of different models on Cifar10 dataset with IID data distribution.}
    \label{Cifar10_IID}
\vspace{-18pt}    
\end{figure*}

\section{Numerical Experiments}
In this section, we conduct extensive experiments to demonstrate the efficiency of our proposed framework on commonly used real-world datasets in federated learning.

\subsection{Experimental Settings}
Our proposed approach is implemented on three real-world datasets (i.e., MNIST \cite{lecun1998gradient}, Cifar10 \cite{krizhevsky2009learning} and EMNIST \cite{cohen2017emnist}) to demonstrate the generalization of our framework. We adopt Dirichlet distribution \cite{hsu2019measuring} to partition the datasets into Non-IID by setting the concentration parameter as 1.0. In addition, three classic machine-learning models with different structures are implemented for each dataset. Specifically, for the MNIST dataset, we leverage a linear model with a fully connected layer of 784 input and 10 output channels. As to the Cifar10 dataset, the local training model is consistent with \cite{mcmahan2017communication}. For the EMNIST dataset, we adopt a CNN similar to the structure of LeNet-5 \cite{lecun1998gradient}. The correlated basic dataset information detailed parameter settings are summarized in Table \ref{experiment_information}.

\subsection{Experiments on Real Datasets}
In the experiments, we denote the approach without client selection and DP mechanism as \textbf{FedAvg}, the approach only performs client selection as \textbf{FedAvg-select}, the approach only considers DP mechanism as \textbf{FedAvg-DP}, and our proposed method with both client selection and DP mechanism as \textbf{QI-DPFL}. Moreover, to verify the efficiency and effectiveness of our proposed FL framework, we compare it with two baselines, named $\textbf{Max}$ and $\textbf{Random}$, which is similar to the comparison paradigm in \cite{kang2023incentive}. Specifically, $\textbf{Max}$ means the central server chooses the largest possible reward value in each global iteration to guarantee the best performance. $\textbf{Random}$ refers to selecting a random reward for clients incentivizing. Other incentive mechanisms such as \cite{hu2020trading} and \cite{sun2022profit} are designed based on different objectives, namely to maximize the utility of both clients and the central server, and maximize the profit of the model marketplace by designing an auction scenario for DPFL, respectively, which can't be compared directly with our QI-DPFL framework. Thus, We exclude these two closely related methods from our comparative analysis.

\noindent \textbf{EMNIST:} For IID data distribution, as shown in Fig. \ref{EMNIST_IID_acc_different_model}, compared with FedAvg, FedAvg-DP considers DP mechanism for privacy preservation and obtains the lowest model accuracy. As FedAvg-select and QI-DPFL select clients with high-quality data, they improve the model performance on convergence rate and accuracy. In Fig. \ref{EMNIST_IID_acc_different_strategy}-\ref{EMNIST_IID_reward}, our proposed QI-DPFL achieves the lowest cost and allocated reward in the premise of guaranteeing model performance as the Max-select method and achieves higher accuracy than FedAvg-DP. For Non-IID data distribution, the superiority of QI-DPFL is more pronounced. In Fig. \ref{EMNIST_Non_IID_acc_different_model}, FedAvg-select and QI-DPFL improve the convergence rate and model accuracy by considering the client selection mechanism compared to FedAvg. FedAvg-DP obtains the lowest accuracy as artificial Gaussian random noise is added to avoid privacy leakage. Our proposed framework QI-DPFL with perturbation on local parameters still achieves a similar model performance as FedAvg-select, which shows the effectiveness of QI-DPFL. In Fig. \ref{EMNIST_Non_IID_acc_different_strategy}-\ref{EMNIST_Non_IID_reward}, our algorithm attains model performance that is on par with the Max-select method while keeping costs and rewards minimal. Further, QI-DPFL outperforms FedAvg-DP by selecting participants with superior data quality. The experimental result on the MNIST dataset is similar to that on the EMNIST dataset.

\begin{table}[t]
\caption{Basic Information of Datasets and Parameter Settings}
\vspace{-20pt}
\begin{center}
\resizebox{0.49\textwidth}{!}{\begin{tabular}{c|c|c|c|c|c|c|c}
\toprule[1pt]

\multirow{1}{*}{\multirowcell{1}{\centering\textbf{Datasets}}} & \multicolumn{1}{c|}{\parbox{1.2cm}{\centering\textbf{Training}\\\textbf{Set Size}}} & \multicolumn{1}{c|}{\parbox{1.2cm}{\centering\textbf{Validation}\\\textbf{Set Size}}} & \multirow{1}{*}{\multirowcell{1}{\centering\textbf{Class}}} & \multicolumn{1}{c|}{\parbox{1.5cm}{\centering\textbf{Image Size}}} & $\eta$ & $T$ & \multicolumn{1}{c}{\parbox{1.2cm}{\centering\textbf{Discount}\\\textbf{Factor $\pi$}}} \\ \cmidrule[0.5pt](l{1pt}r{0pt}){1-8}

MNIST & 60,000 & 10,000  & 10 & 1 $\!\times\!$ 28 $\!\times\!$ 28 & 0.01 & 30 & 0.9429  \\ \cmidrule[0.5pt](l{1pt}r{0pt}){1-8}

Cifar10 & 50,000 & 10,000  & 10 & 3 $\!\times\!$ 32 $\!\times\!$ 32 & 0.1 & 80 & 0.9664 \\ \cmidrule[0.5pt](l{1pt}r{0pt}){1-8}

EMNIST & 731,668 & 82,587  & 62 & 1 $\!\times\!$ 28 $\!\times\!$ 28 & 0.01 & 30 & 0.901 \\ 

\bottomrule[1pt]
\end{tabular}}
\label{experiment_information}
\end{center}
\vspace{-25pt}
\end{table}

\begin{figure*}
\setlength{\abovecaptionskip}{2pt} 
    \centering
    \subfloat[Accuracy on different models]{
        \label{Cifar10_Non_IID_acc_different_model}
        \includegraphics[width=0.249\textwidth, trim=35 0 75 20,clip]{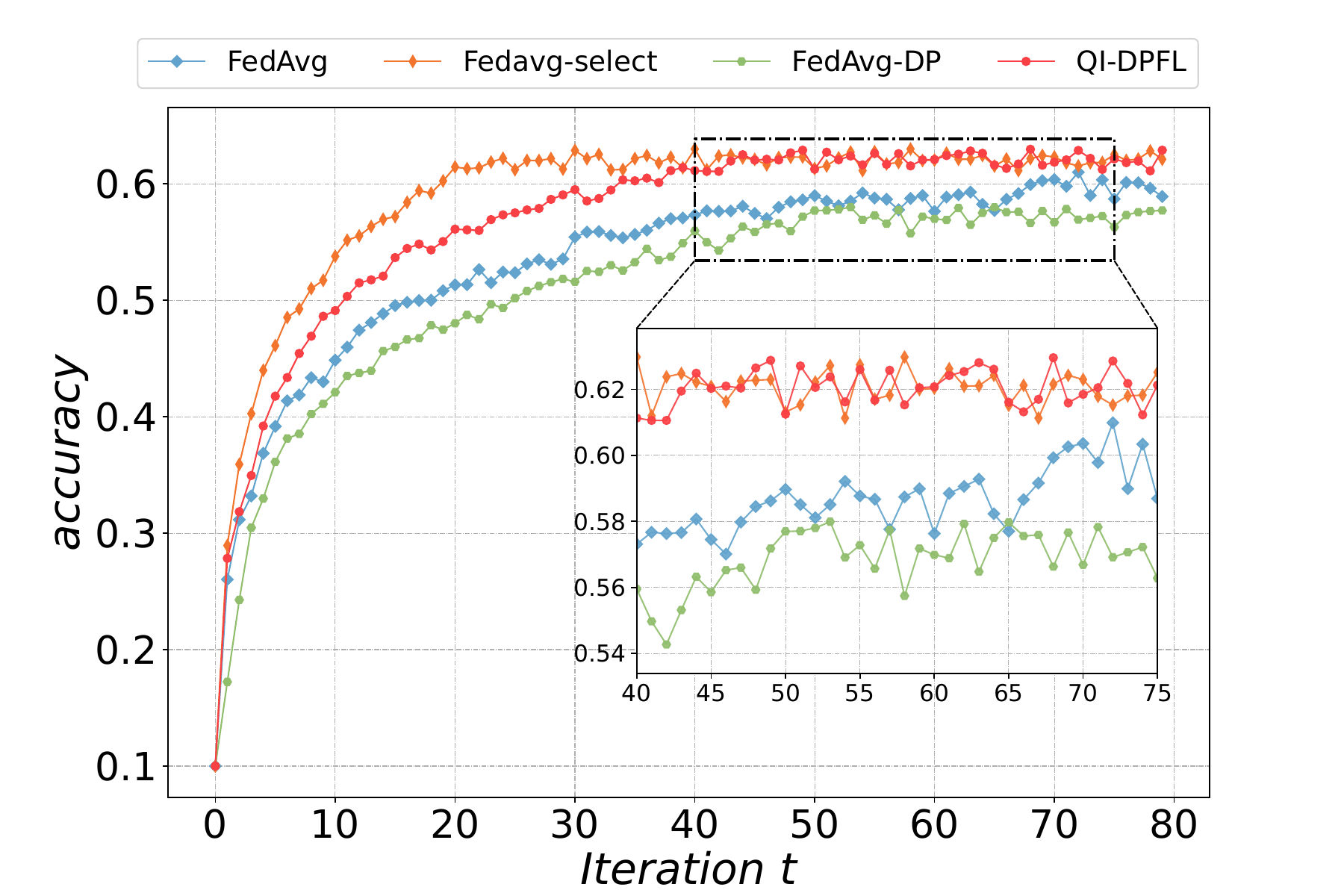}}
    \subfloat[Accuracy on different strategies]{
        \label{Cifar10_Non_IID_acc_different_strategy}
        \includegraphics[width=0.249\textwidth, trim=35 0 75 20,clip]{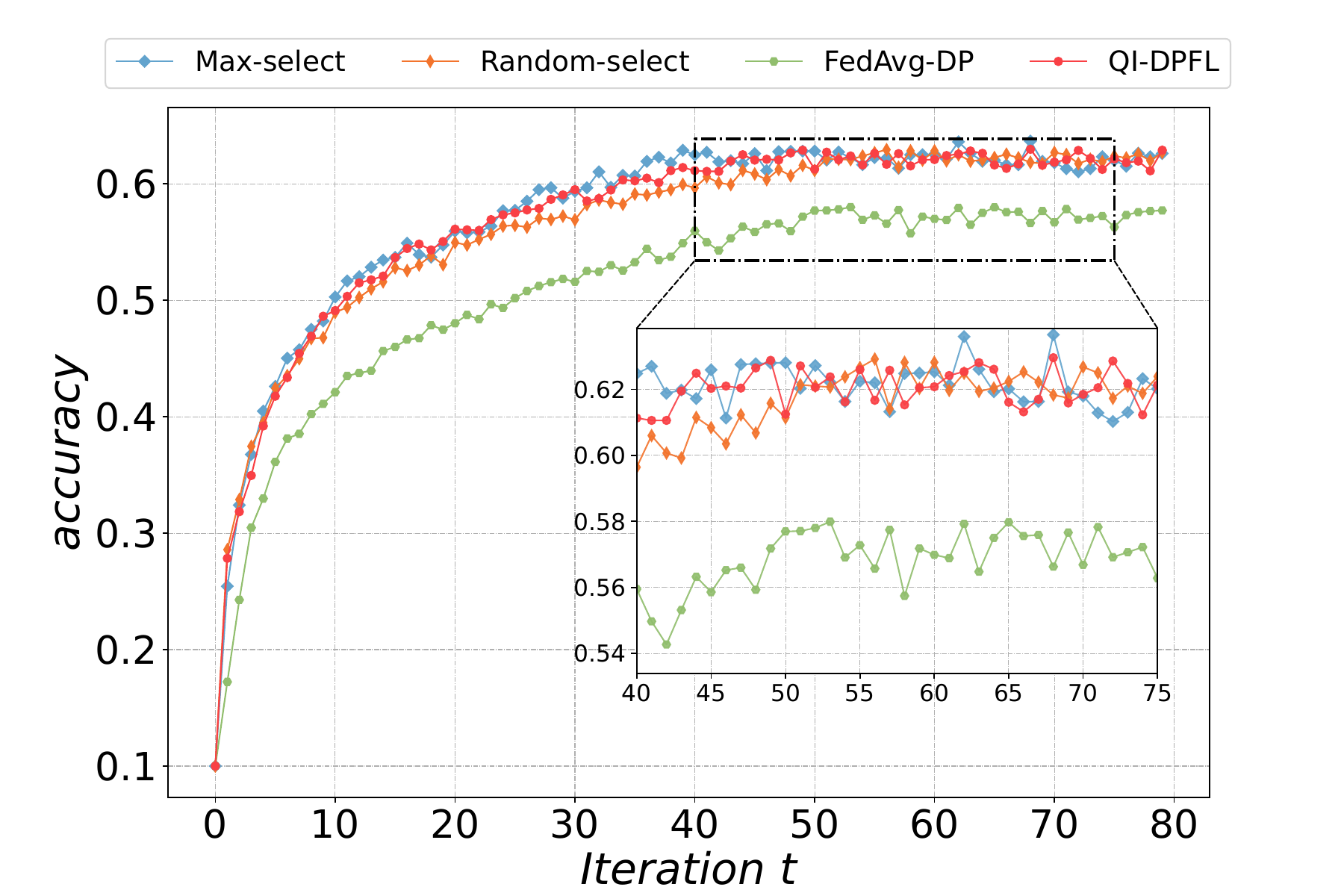}}
    \subfloat[Cost of central server]{
        \label{Cifar10_Non_IID_cost}
        \includegraphics[width=0.249\textwidth, trim=35 0 75 15,clip]{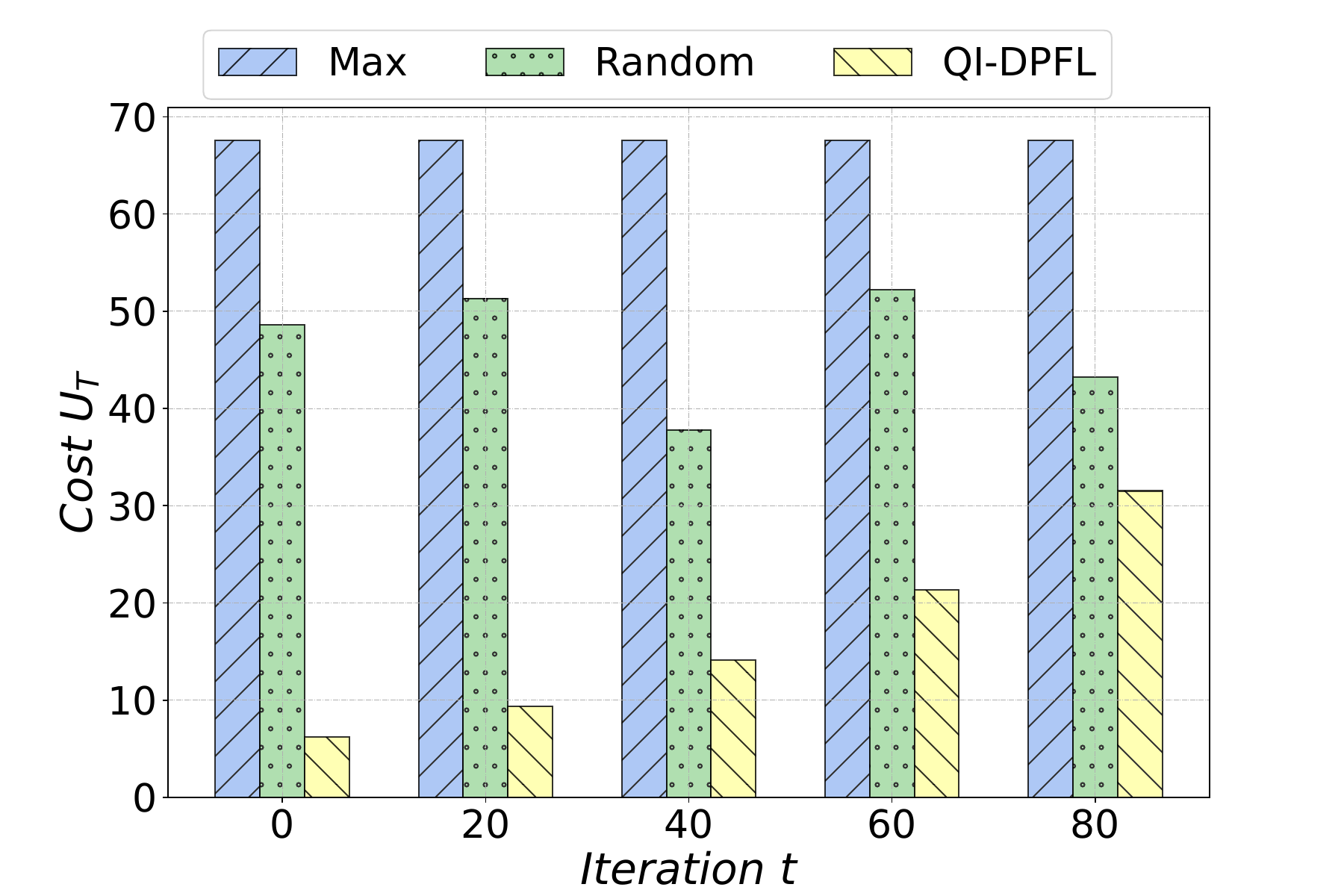}}
    \subfloat[Accuracy on different strategies]{
        \label{Cifar10_Non_IID_reward}
        \includegraphics[width=0.249\textwidth, trim=35 0 75 15,clip]{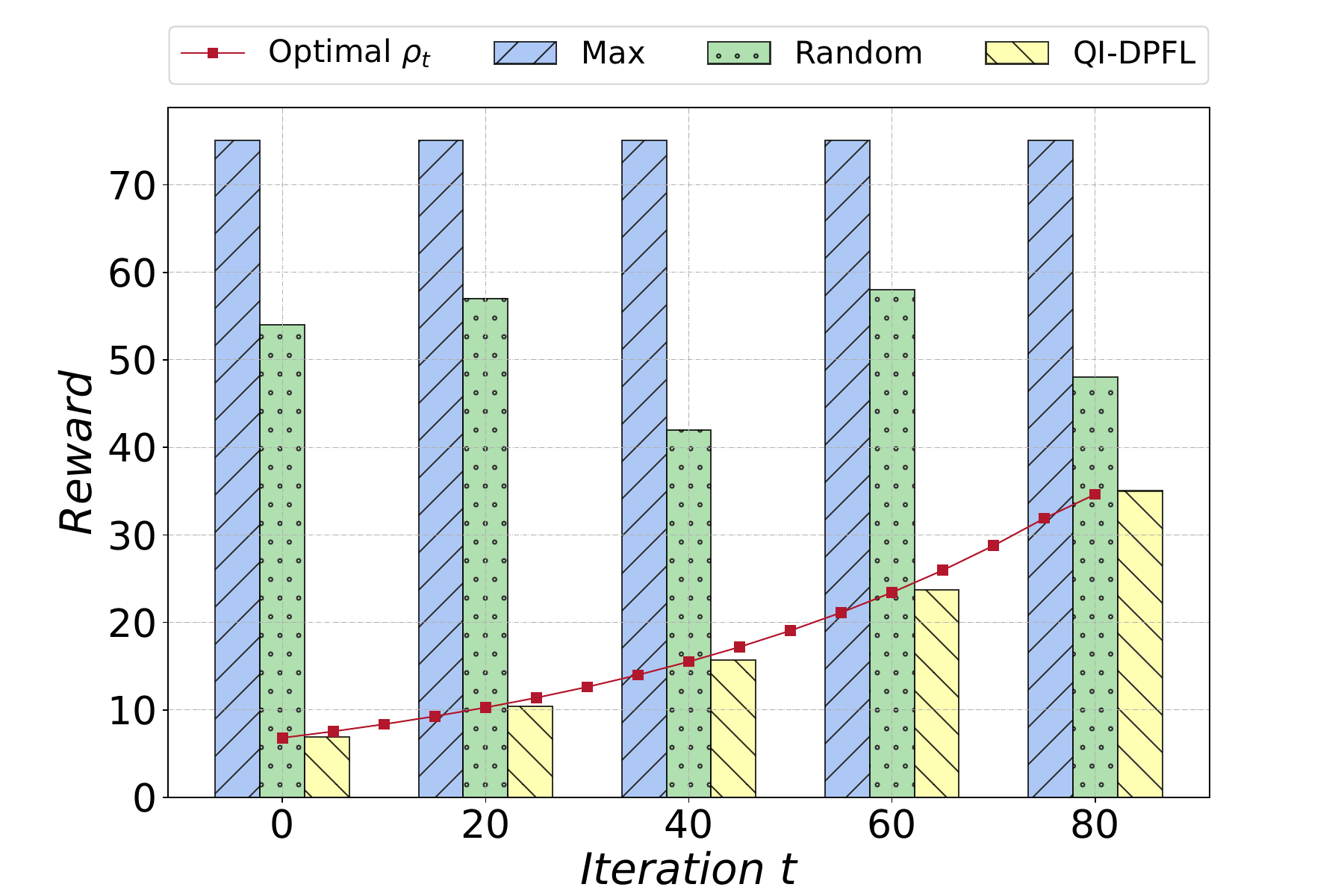}}
    \caption{The performance of different models on Cifar10 dataset with Non-IID data distribution.}
    \label{Cifar10_Non_IID}
\vspace{-20pt}    
\end{figure*}

\noindent \textbf{Cifar10:} For the IID datasets, as shown in Fig. \ref{Cifar10_IID_acc_different_model}, compared with FedAvg, the accuracy of FedAvg-select achieves a higher accuracy with the client selection mechanism. To avoid privacy leakage, the artificial noise added to the transmitted parameters in the DP mechanism may deteriorate the model performance. The accuracy of FedAvg-DP is lower than that of FedAvg as shown in Fig. \ref{Cifar10_IID_acc_different_model}. Our proposed algorithm QI-DPFL with the perturbation on transmitted local parameters still achieves a fast convergence rate and similar testing accuracy as FedAvg-select, revealing the effectiveness of QI-DPFL. Fig. \ref{Cifar10_IID_acc_different_strategy}-\ref{Cifar10_IID_reward} indicate that our approach achieves comparable model performance compared to the Max-select method while maintaining the lowest cost and reward. Additionally, QI-DPFL achieves higher accuracy than FedAvg-DP by selecting clients with high data quality. Under Non-IID data distribution, the advantages of QI-DPFL are even more conspicuous. In Fig. \ref{Cifar10_Non_IID_acc_different_model}, FedAvg-DP obtains the lowest model accuracy as the DP mechanism. FedAvg-select and QI-DPFL select clients with high-quality data, they thus exhibit enhanced performance concerning both convergence rate and model accuracy. Moreover, although QI-DPFL considers artificial noise, it still achieves similar accuracy as FedAvg-select, which demonstrates the effectiveness of our model. In Fig. \ref{Cifar10_Non_IID_acc_different_strategy}-\ref{Cifar10_Non_IID_reward}, QI-DPFL attains model accuracy that is on par with the Max-select scheme while keeping costs and rewards minimal. Further, QI-DPFL outperforms FedAvg-DP by selecting clients with superior data quality.

Based on the above experimental results, it becomes evident that the advantages of our proposed QI-DPFL are particularly pronounced on the Non-IID datasets, which is attributed to the heightened effectiveness of the client selection mechanism.

\section{Conclusion}
In this paper, we propose a novel federated learning framework called QI-DPFL to jointly solve the client selection and incentive mechanism problem on the premise of preserving clients' data privacy. We adopt Earth Mover’s Distance (EMD) metric to select clients with high-quality data. Furthermore, we model the interactions between clients and the central server as a two-stage Stackelberg game and derive the Stackelberg Nash Equilibrium to describe the steady state of the system. Extensive experiments on MNIST, Cifar10, and EMNIST datasets for both IID and Non-IID settings demonstrate that QI-DPFL achieves comparable model accuracy and faster convergence rate with the lowest cost and reward for the central server.

\bibliographystyle{IEEEtran}
\bibliography{references}

\end{document}